\newif\ifstoc
\stoctrue %last one wins
\stocfalse
\documentclass[11pt]{article}
\usepackage{hyperref}
\hypersetup{colorlinks=true,%pdfborder={1 1 1 [3]},%
            citebordercolor={.6 .6 .6},linkbordercolor={.6 .6 .6},%
citecolor=blue,urlcolor=black,linkcolor=blue,pagecolor=black}

\usepackage{amsmath,amssymb,bbm,amsthm}
\usepackage{fullpage}
\usepackage{thm-restate,color,xcolor,xspace}
\usepackage[capitalise,nameinlink]{cleveref}
\usepackage{graphicx}

\makeatletter
\addtolength{\partopsep}{-2mm}
\setlength{\parskip}{5pt plus 1pt}
%\allowdisplaybreaks
 \textheight 9in
\makeatother

\usepackage{algorithm}
\usepackage{algpseudocode}
\usepackage{algorithmicx}
\usepackage{setspace}
\setstretch{1.02}
\usepackage[compact]{titlesec}
\allowdisplaybreaks

%%%%% BEGIN Jason's macros %%%%%
%\newcommand{\f}{\displaystyle\frac}

\newcommand{\cd}{\cdot}
\newcommand{\bn}{\binom}

\newcommand{\lds}{\ldots}

\newcommand{\sm}{\setminus}
\newcommand{\s}{\subseteq}

\newcommand{\BE}{\begin{enumerate}}
\newcommand{\EE}{\end{enumerate}}

\newcommand{\BI}{\begin{itemize}}
\newcommand{\EI}{\end{itemize}}

\newcommand{\eps}{\varepsilon}
\renewcommand{\epsilon}{\varepsilon}

\newcommand{\la}{\lambda}

\newcommand{\al}{\alpha}

\newcommand{\Om}{\Omega}
\newcommand{\el}{\ell}

\newcommand{\m}{\mathcal}

\newcommand{\lc}{\lceil}
\newcommand{\rc}{\rceil}

\newcommand{\poly}{\mathrm{poly}}
\newcommand{\polylog}{\mathrm{polylog}}

\newcommand{\lmt}{\left[\begin{matrix}}
\newcommand{\rmt}{\end{matrix}\right]}

\newtheorem{theorem}{Theorem}
\newtheorem{lemma}[theorem]{Lemma}
\newtheorem{definition}[theorem]{Definition}
\newtheorem{corollary}[theorem]{Corollary}
\newtheorem{observation}[theorem]{Observation}
\newtheorem{proposition}[theorem]{Proposition}
\newtheorem{claim}[theorem]{Claim}
\newtheorem{subclaim}{Subclaim}
\newtheorem{fact}[theorem]{Fact}
\newtheorem{assumption}[theorem]{Assumption}
\newcommand{\BT}{\begin{theorem}}
\newcommand{\ET}{\end{theorem}}
\newcommand{\BL}{\begin{lemma}}
\newcommand{\EL}{\end{lemma}}
\newcommand{\BD}{\begin{definition}}
\newcommand{\ED}{\end{definition}}
\newcommand{\BC}{\begin{corollary}}
\newcommand{\EC}{\end{corollary}}
\newcommand{\BO}{\begin{observation}}
\newcommand{\EO}{\end{observation}}
\newcommand{\BCL}{\begin{claim}}
\newcommand{\ECL}{\end{claim}}
\newcommand{\BSCL}{\begin{subclaim}}
\newcommand{\ESCL}{\end{subclaim}}
\newcommand{\BF}{\begin{fact}}
\newcommand{\EF}{\end{fact}}
\newcommand{\BA}{\begin{assumption}}
\newcommand{\EA}{\end{assumption}}
\newcommand{\BP}{\begin{proof}}
\newcommand{\EP}{\end{proof}}
\newcommand{\BPS}{\begin{proof}[Proof (Sketch)]}
\newcommand{\EPS}{\end{proof}}
\Crefname{observation}{Observation}{Observations}
\Crefname{claim}{Claim}{Claims}
\Crefname{subclaim}{Subclaim}{Subclaims}
\Crefname{fact}{Fact}{Facts}
\Crefname{assumption}{Assumption}{Assumptions}

\newcommand{\tO}{\tilde{O}}

\newcommand{\leml}[1]{\label{lem:#1}}
\newcommand{\lem}[1]{\Cref{lem:#1}}

%%%%% END Jason's macros %%%%%

\newcommand{\Venn}{\textup{Venn}}

\newcommand{\lk}{\overline\lambda_k}

%%% Euiwoong's macros begin &&&&

\newcommand{\kcut}{\ensuremath{k\textsc{-Cut}}\xspace}
 
\newcommand{\kclique}{\textsc{Max-Weight~\ensuremath{k}\textsc{-Clique}}\xspace}
\newcommand{\kkclique}{\textsc{Max-Weight~\ensuremath{(k-1)}\textsc{-Clique}}\xspace}

\newcommand{\sun}{\mathsf{sf}}

%%% Euiwoong's macros end &&&&

%%% Macros from relnote begin %%%%%%

%\newcommand{\eps}{\varepsilon}

\newcommand{\bE}{\ensuremath{\mathbf{E}}}

%%% Macros from relnote end %%%%%%

% \usepackage[colorinlistoftodos,textsize=tiny,textwidth=2cm,color=red!25!white,obeyFinal]{todonotes}
% \newcommand{\agnote}[1]{\todo[color=blue!25!white]{AG: #1}\xspace}
% \newcommand{\elnote}[1]{\todo[color=green!25!white]{EL: #1}\xspace}
% \newcommand{\jlnote}[1]{\todo[color=red!25!white]{JL: #1}\xspace}

\newcommand{\initOneLiners}{%
    \setlength{\itemsep}{0pt}
    \setlength{\parsep }{0pt}
    \setlength{\topsep }{0pt}
%      \usecounter{myLISTctr}
}

\begin{document}

\title{\textbf{Optimal Bounds for the $k$-cut Problem}}
\author{ Anupam Gupta\thanks{{\tt anupamg@cs.cmu.edu}. Supported in part by NSF award CCF-1907820, CCF-1955785, and CCF-2006953, and the Indo-US Joint Center for Algorithms Under Uncertainty. } \\ CMU
\and David G. Harris\thanks{{\tt davidgharris29@gmail.com}. } \\ UMD
 \and Euiwoong Lee\thanks{{\tt euiwoong@umich.edu}. Supported in part by the Simons Collaboration on Algorithms and Geometry. }\\ University of Michigan
  \and Jason Li\thanks{{\tt jmli@cs.cmu.edu}. Supported in part by NSF award
    CCF-1907820. }\\ CMU}
\date{}

\maketitle
\thispagestyle{empty}

\begin{abstract}
  In the $k$-cut problem, we want
  to find the lowest-weight set of edges whose deletion breaks a given 
  (multi)graph into $k$ connected components. Algorithms of Karger \& Stein
  can solve this in roughly $O(n^{2k})$ time. On the other hand, lower bounds from
  conjectures about the $k$-clique problem imply that $\Omega(n^{(1-o(1))k})$ time is likely needed.  Recent results of Gupta, Lee \& Li have
  given new algorithms for general $k$-cut in 
  $n^{1.98k + O(1)}$ time, as well as specialized algorithms with better performance for
  certain classes of graphs (e.g., for small integer edge weights).

  In this work, we resolve the problem for general graphs. We show  that  the Contraction Algorithm of Karger outputs any fixed $k$-cut of weight $\alpha \lambda_k$
  with probability $\Omega_k(n^{-\alpha k})$, where $\lambda_k$ denotes the minimum $k$-cut weight.  This also gives an extremal bound of $O_k(n^k)$
  on the number of minimum $k$-cuts    and an algorithm to compute $\lambda_k$ with  roughly $n^k \polylog(n)$ runtime. 
  Both are tight up to lower-order factors, with the algorithmic lower bound assuming hardness of max-weight $k$-clique.

  The first main ingredient in our result is an extremal bound on the number of cuts of weight less than
  $2 \lambda_k/k$, using the Sunflower lemma. The second ingredient is a fine-grained analysis
  of how the graph shrinks -- and how the average degree
  evolves -- in the Karger process.
  
\end{abstract}

\newpage

\setcounter{page}{1}

\section{Introduction}
\label{sec:introduction}

We consider the \kcut problem: given an edge-weighted graph
$G = (V,E,w)$ and an integer $k$, we want to delete a minimum-weight set of edges
so that $G$ has at least $k$ connected components. We let $\lambda_k$ denote the resulting weight of the deleted edges. This
generalizes the global min-cut problem, where the goal is to break the
graph into $k=2$ pieces. 

It was unclear that the problem admitted a
polynomial-time algorithm for fixed $k$, until 
Goldschmidt and Hochbaum gave a deterministic algorithm with
$n^{O(k^2)}$ runtime~\cite{goldschmidt1988polynomial}. The algorithm of Karger~\cite{karger1993global}, based on random edge contractions, can also solve \kcut in $\tO(m n^{2 k-1})$ time; this was later improved to $\tO(n^{2 k - 2})$ runtime by Karger and Stein~\cite{KS96}. There have been a number of improved deterministic algorithms~\cite{GH94,KYN06,Thorup08,chekuri2018lp}: notably, the tree-packing
result of Thorup~\cite{Thorup08} was sped up by Chekuri et
al.~\cite{chekuri2018lp} to $O(mn^{2k-3})$ runtime. Thus, until
recently, randomized and deterministic algorithms with very different
approaches have achieved $n^{(2-o(1))k}$ runtime for the problem. (Here and subsequently, the
$o(1)$ in the exponent indicates a quantity that goes to zero as $k$
increases.)

As for hardness, there is a reduction from $\kkclique$ to $\kcut$. It
is conjectured that solving \kclique requires
$\Omega(n^{(1-o(1))k})$ time when weights are integers in the
range $[1,\Om(n^k)]$, and $\Omega(n^{(\omega/3 - o(1)) k})$ time for
unit weights, where  $\omega$ is the matrix multiplication constant.
Extending these bounds to $\kcut$ suggests that $n^{(1-o(1))k}$ may be a lower bound for general
weighted $k$-cut instances.

There has been recent progress on this problem, showing the following
results:
\begin{enumerate}
\item Gupta, Lee \& Li gave an $n^{(1.98 + o(1))k}$-time algorithm for general
  $\kcut$~\cite{GLL19}. This was based on showing an extremal bound for
  the number of ``small'' $2$-cuts in the graph. A  bounded-depth search is then used to guess the small $2$-cuts within a
 minimum $k$-cut and make progress. This proof-of-concept
  result showed that  $n^{(2-o(1))k}$ was not the right
  bound, but the approach did not seem to extend to exponents considerably below $2k$.
\item For polynomially-bounded edge-weights, Gupta, Lee \& Li gave an
  algorithm with  roughly
  $k^{O(k)} \, n^{(2\omega/3 + o(1))k}$ runtime~\cite{GLL18focs}. For
  unweighted graphs, Li obtained $k^{O(k)} n^{(1+o(1))k}$
  runtime~\cite{Li19focs}. These algorithms are both based on
  finding a spanning tree which crosses a small number of edges of a minimum $k$-cut.
  The former relies on matrix multiplication ideas, and
  the latter on the Kawarabayashi-Thorup graph decomposition \cite{kawarabayashi2018deterministic}, which are both intrinsically tied to graphs with small edge-weights.
\end{enumerate}

%\alert{We need to write some text about extending to general weights.}

In this paper, we %present what we think is the final word on the topic. We
show that the ``right'' algorithm, the original Contraction Algorithm of Karger~\cite{karger1993global},
achieves the ``right'' bound for general graphs.  We recall the algorithm below; here, $\tau$ (the final desired graph size) is a parameter we will adjust in our specific constructions.
\begin{algorithm}[H]
    \caption{Contraction Algorithm}
    \label{euclid}
    \begin{algorithmic}[1] % The number tells where the line numbering should start
%        \Procedure{Karger-Stein}{$G = (V, E), \tau, k$} \Comment{Compute a minimum $k$-cut of $G$}
            \While{$|V| > \tau$}% \Comment{We have the answer if r is 0}
                        \State Choose an edge $e \in E$ at random from $G$, with probability proportional to its weight.
                \State Contract the two vertices in $e$ and remove
                self-loops.
                %\Comment{$|V|$ decreases by exactly $1$}
            \EndWhile
            \State Return a $k$-cut of $G$ chosen uniformly at random. 
  %      \EndProcedure
    \end{algorithmic}
\end{algorithm}

Setting $\tau=k$, as in Karger's original algorithm, would seem most natural,  but we will require a larger value in our analysis.  Our main result is the following.

\begin{restatable}[Main]{theorem}{Main}
\label{thm:main} For any integer $k\geq 2$ and real number $\alpha\ge1$, the Contraction Algorithm outputs each $k$-cut of weight $\alpha
\lambda_k$ with probability at least $n^{-\alpha k} k^{-O(\alpha k^2)}$ for appropriate choice of $\tau=\poly(\alpha,k)$. 
\end{restatable}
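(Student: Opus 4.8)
The plan is to bound directly the probability $p$ that a fixed target $k$-cut $C$ with $w(C)=\alpha\lambda_k$ is never \emph{violated}, i.e.\ that no edge crossing $C$ is ever contracted; the final ``return a uniformly random $k$-cut'' step then contributes a further factor of at least $1/S(\tau,k)\ge k^{-\tau}$, and with $\tau=\poly(\alpha,k)$ chosen of size $O(\alpha k^2)$ this is $\ge k^{-O(\alpha k^2)}$ and is absorbed. (The small cases $n<\tau$ and $n$ near $k$ are handled separately and only help.) Conditioning on survival through the first $n-j$ contractions, the chance of violating $C$ in the step from $j$ to $j-1$ vertices is exactly $\alpha\lambda_k/W_j$, where $W_j$ is the total weight of the current $j$-vertex graph $G_j$; hence
\[
  p=\E\!\left[\prod_{j=\tau+1}^{n}\Big(1-\tfrac{\alpha\lambda_k}{W_j}\Big)\right],
\]
the expectation being over the contraction process conditioned on $C$ surviving. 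Since $W_j\ge\frac{j\lambda_k}{2(k-1)}$ always (isolating the $k-1$ lowest-degree vertices of $G_j$ is a $k$-cut, so $\lambda_k\le\frac{2(k-1)}{j}W_j$), every factor is at least $1-\frac{2\alpha k}{\tau}\ge\frac12$, and $\ln(1-x)\ge -x-x^2$ reduces the problem to showing $\sum_{j}\frac{\alpha\lambda_k}{W_j}\le\alpha k\ln n+O(\alpha k^2\log k)$, the quadratic correction $\sum_j(\alpha\lambda_k/W_j)^2=O(\alpha^2 k^2/\tau)$ being negligible for $\tau=\poly(\alpha,k)$.

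The classical estimate $W_j\ge\frac{j\lambda_k}{2(k-1)}$ by itself gives only $p\ge n^{-(2-o(1))\alpha(k-1)}$, off from the target by roughly a factor $2$ in the exponent; the cycle $C_n$ with a $k$-partition cutting $\alpha k$ of its edges (there $W_j=j$, $\lambda_k=k$, so $W_j=j\lambda_k/k$, exactly twice the classical bound) shows both that $n^{-\alpha k}$ is tight and that the classical analysis is wasteful precisely by this factor. Call a step \emph{healthy} if $W_j\ge j\lambda_k/k$ and \emph{bad} otherwise. On healthy steps $\frac{\alpha\lambda_k}{W_j}\le\frac{\alpha k}{j}$, and $\sum_{j>\tau}\frac{\alpha k}{j}\le\alpha k\ln n$ delivers the main term; on bad steps we only have $\frac{\alpha\lambda_k}{W_j}\le\frac{2\alpha(k-1)}{j}$, an excess of at most $\frac{\alpha k}{j}$ over the healthy bound, so it suffices to prove $\sum_{j\ \mathrm{bad}}\frac1j=O(k\log k)$ (deterministically, or with probability $1-o(1)$).

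To control the bad steps we use the two new ingredients. A bad $G_j$ has average degree below $2\lambda_k/k$, hence is nearly as sparse as the classical bound permits, which forces it to be close to regular with $\lambda_k$ close to $(k-1)$ times its average degree, and in turn forces $G_j$ to contain many \emph{light cuts} ($2$-cuts of weight $<2\lambda_k/k$) arranged in an essentially one-dimensional pattern, as in a cycle or ladder. The first ingredient is the extremal bound, proved via the Sunflower lemma, on the number and combinatorial structure of such light cuts; the second is a fine-grained accounting of how $W_j$ and the per-step removed bundle weight $D_j:=W_j-W_{j-1}$ evolve. Combining them, one shows that a bad graph must ``heal'': once $W_j$ dips below $j\lambda_k/k$, later contractions remove little weight relative to the average degree — the light-cut bound rules out ``many heavy bundles'' and the long persistent near-paths that could keep the graph sparse — so $W_j$ decays far more slowly than the linear threshold $j\lambda_k/k$ and re-crosses it after the vertex count has shrunk by at most a $k^{O(k)}$ factor. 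This yields $\sum_{j\ \mathrm{bad}}\frac1j=O(k\log k)$, hence $\sum_j\frac{\alpha\lambda_k}{W_j}\le\alpha k\ln n+O(\alpha k^2\log k)$, and $p\ge n^{-\alpha k}k^{-O(\alpha k^2)}$; the output-step factor then completes the claimed bound.

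The main obstacle is the second ingredient: showing robustly that a graph as sparse as the classical bound allows cannot sustain that sparsity under random contraction. This must be proved for arbitrary degree sequences and arbitrary real edge weights, and a single contraction can in principle delete a bundle of weight comparable to $\lambda_k$ — so the light-cut bound has to be used not merely as a count but as genuine structural control over the bad regime, and $W_j$ must be tracked through the resulting structured graphs with all error terms kept inside the $k^{O(k)}$ budget. Upgrading the ``bad band is short'' statement from an in-expectation claim to one that holds with probability $1-o(1)$, and coupling that analysis cleanly with the conditioned contraction process, are the remaining technical points; the value of $\tau=\poly(\alpha,k)$ is then fixed so as to simultaneously keep every per-step probability below $1/2$, make the quadratic correction negligible, leave slack in the bad-band estimate, and keep $k^{-\tau}\ge k^{-O(\alpha k^2)}$.
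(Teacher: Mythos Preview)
Your overall framework matches the paper's: both reduce to bounding the random sum $R_\tau=\sum_{j>\tau}\lk/W_j$ (your $\sum_j \alpha\lambda_k/W_j$ is exactly $\alpha k R_\tau$), both appeal to the extremal bound on the number of $2$-cuts of weight below $2\lk$, and both absorb the final selection step into $k^{-O(\alpha k^2)}$ by taking $\tau=\Theta(\alpha k^2)$. The difference is in how the central estimate $\bE[R_\tau]\le\log n+O(k\log k)$ is actually proved, and this is where your proposal has a genuine gap.

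Your plan partitions steps into ``healthy'' ($W_j\ge j\lambda_k/k$) and ``bad'' and asserts that the bad band satisfies $\sum_{j\ \text{bad}}1/j=O(k\log k)$ because a sparse graph must ``heal.'' But the mechanism you describe---that a bad $G_j$ is forced to look one-dimensional, that later contractions then remove little weight, that $W_j$ therefore re-crosses $j\lambda_k/k$ after the vertex count shrinks by at most $k^{O(k)}$---is not substantiated, and the structural picture (``long persistent near-paths'') is not what the extremal bound actually gives you. The extremal bound gives only a \emph{count}: at most $k^{O(k)}n$ medium cuts, with no geometric arrangement. You also flag the need to upgrade a statement about the bad band from expectation to high probability, and to couple it with the conditioned process; both are real difficulties, and neither is addressed.

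The paper sidesteps all of this. It does not split into healthy/bad, and it never needs concentration. Instead it tracks a single scalar statistic along the (unconditioned) Contraction Process for $J$: the number $S_j$ of surviving ``good'' cuts (medium cuts with most of their boundary outside $J$). Each good cut is hit with probability at least $(1+\delta)\lk/W_j$ per step, so $\bE[S_j]\le s\,e^{-(1+\delta)R_j}$, while \Cref{m-bound} gives $W_j\ge (j-\beta)\lk - S_j\lk/2$. Feeding these into the recursion $R_{j-1}=R_j+\lk/W_j$ suggests, via a differential-equation heuristic, an explicit closed-form potential $f(j,s,p)$; the paper then \emph{proves} $\bE[R_i]\le f(i-\beta,s,n-\beta)$ by a one-step induction on $n$. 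The key point is that $f$ is concave and monotone in $s$, so Jensen's inequality lets the induction go through in expectation, with no appeal to concentration or to any structural ``healing'' picture. Plugging in $s\le k^{O(k)}n$ from the extremal bound then gives $\bE[R_\tau]\le\log(n/\tau)+O(k\log k)$ directly. This potential-function-plus-Jensen step is the missing idea in your proposal; without it, the healthy/bad decomposition does not obviously close.
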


Since any minimum-weight $k$-cut (corresponding to $\alpha = 1$) is output with probability
$n^{-k} k^{-O(k^2)}$, this immediately implies the following corollary.

\BC[Number of Minimum $k$-cuts]\label{cor:enum} For any $k \geq 2$, 
the number of minimum-weight $k$-cuts in a graph is at most $n^k k^{O(k^2)}$.
\EC
This improves on the previous best bound of $n^{(1.98 + o(1))k}$~\cite{GLL19}. 
It is almost tight because the cycle on $n$ vertices has $\binom{n}{k}$ minimum $k$-cuts.

Also, while the direct implementation of Algorithm~\ref{euclid} incurs an extra $O(n^2)$ in the runtime,
the Recursive Contraction Algorithm of Karger and Stein~\cite{KS96} can be used to get 
an almost-matching running time to enumerate all minimum $k$-cuts.

\BT[Faster Algorithm to Find a Minimum $k$-cut]\label{cor:runtime}
There is an algorithm to enumerate all minimum $k$-cuts in time $n^{k}(\log n)^{O(k^2)}$ with probability at least $1 - 1/\poly(n)$.
%\footnote{While naively implementing the Karger-Stein algorithm takes time $O(n^2)$ per iteration, 
%the well-known trick of reducing the number of vertices by a factor $1/2^{1/(2k-2)}$ and recursively running the algorithm twice
%is known to reduce the total running time to $\widetilde{O}(n^{2k-2})$. Since our result is an improved analysis of the same algorithm,
%we can still apply similar tricks to reduce the total running time to
%$\widehat{O}(n^{k})$. We defer the details to the full version of the paper.}
\ET
This improves the runtime 
$n^{(1.98 + o(1))k}$ from~\cite{GLL19}
and even beats the runtime $n^{(1+o(1))k}$ for the unweighted case~\cite{Li19focs}. It is almost optimal under the hypothesis that \kclique requires $n^{(1-o(1))k}$ time.  Achieving an $O(n^{ck})$-time algorithm for unit-weighted graphs for any constant $c < 1$ still remains an open problem. 

See Section~\ref{sec:final} for the formal statements of the above
theorems.

\subsection{Our Techniques}
Although we have stated the general $\kcut$ problem for a weighted graph, we will assume throughout that $G = (V,E)$ is an unweighted multigraph with $n$ vertices and $m$ edges. The viewpoint in terms of weighted graphs is equivalent via replicating edges;  note that, in this case,  $m$ may be exponentially large compared to $n$. Our computational and combinatorial bounds will depend on $n$ and not directly on $m$.

In the spirit of \cite{GLL19}, our proof has two main parts:
 (i)~a bound on the extremal number of ``medium'' cuts
 in a graph, and (ii)~a new algorithmic analysis %(this time
 for the Contraction Algorithm. To begin, let us first
state a crude version of our extremal result. Define $\lk:=\la_k/k$, which we think of as the average contribution of the $k$ components of a minimum $k$-cut, and let ``medium'' cuts denote $2$-cuts whose weight is in $[\lk, 2\lk)$.\footnote{In the actual analysis, we use the interval $[\frac{k}{k-1}\lk, 2\lk)$.}
The graph may contain a negligible number of ``small'' $2$-cuts of weight less than $\lk$.
Loosely speaking, the extremal bound says the following:
\begin{quote}
  $(\star)$ For fixed $k$, the graph has at most $O(n)$ many medium cuts. 
\end{quote}

To develop some intuition for this claim, it is instructive to consider the
cycle and clique graphs. These are  two opposite ends of the spectrum in the context
of graph cut. %In both these graphs, the number of minimum $k$-cuts is indeed $\Omega(n^{k})$ for fixed $k$. \alert{Technically not true: the clique has $n^{k-1}$ many. And saying $n^{(1-o(1))k}$ looks cumbersome. Should we just omit this last sentence, since we're only talking about number of mincuts here?}
In the cycle, we have $\lk=1$, and there are \emph{no} 2-cuts 
with weight less than $2 \lk$, hence $(\star)$ holds. However, the $\bn n2$ minimum 2-cuts
have size \emph{equal to} $2\lk=2$.  In the clique, the minimum $k$-cut chops off $k-1$
singleton vertices, so $\la_k = \binom{k-1}{2} + (k-1) (n-k+1)$, which gives  $\lk \approx \frac{k-1}{k} n$ for $n \gg k$.  There are $n$ minimum 2-cuts, which have weight $n-1 < 2 \lk$ (the singletons), so again $(\star)$ holds. And
again, there are $\bn n2$ 2-cuts of weight approximately $2 \lk$ (the doubletons). 

Therefore, in both the cycle and the clique, the bound $2 \lk$ is almost the best
possible. Moreover, the $O(n)$ bound for the number of medium cuts is also
optimal in the clique.

\subsubsection{Analysis of Contraction Algorithm}
When we begin the Contraction Algorithm in the graph $G$, our extremal bound ensures that there are at most $O_k( n )$ medium cuts of size between $\lk$ and $2 \lk$, plus a negligible number of small 2-cuts of size less than $\lk$. Let us next sketch how these bounds give rise to the improved bound for $k$-cuts.   To provide intuition, let us suppose that in fact there are $n$ medium cuts and no smaller 2-cuts (the precise factors are not important for the overall analysis).  

Each vertex during the Contraction Algorithm corresponds to a 2-cut of the original graph, and we are assuming that $G$ has no small $2$-cuts, so the number of edges in each iteration $i$ of the Contraction Algorithm  is lower-bounded by  $i \lk/2$. Again, to provide intuition, let us suppose there are precisely this many edges.  Then each medium cut gets an edge selected in iteration $i$, and is thereby removed from the graph, with probability at least $\frac{\lk}{i \lk/2} = \frac{2}{i}$.  So after $n/2$ iterations,  the number of surviving medium cuts is close to 
$$
n \prod_{i=n/2}^n \Bigl(1 - \frac{2}{i}\Bigr) \approx n/4.
$$

Thus, in the resulting subgraph with $i = n/2$ vertices,  at most $n/4$ of the vertices (corresponding to the surviving medium cuts) have degree $\lk$. The remainder have degree at least $2 \lk$.   Continuing this process, the graph
 becomes more and more enriched with high-degree vertices. After
 $(1 - \epsilon) n$ iterations (for some small constant $\epsilon$),
 almost all of the medium cuts have been eliminated, and each graph on $i
 \leq \epsilon n$ vertices has close to  $i \lk$ edges.

Now consider an arbitrary minimum $k$-cut $K$. It survives the
first $(1-\epsilon) n$ iterations with constant probability. In each
iteration $i$ of the Contraction Algorithm when the resulting subgraph
has $i \leq \epsilon n$ vertices, $K$ is selected with probability
roughly $ \frac{\lambda_k}{i \lk}
= \frac{k}{i}$. Over the entire run of the Contraction Algorithm, down
to the final graph with $\tau = \poly(k)$ vertices, $K$ survives with
probability roughly
%\alert{I changed cons to constant in case it's not clear (I think "const" is standard in programming languages)}
$$
\text{constant} \cdot \prod_{i=\tau}^{\epsilon n} \Bigl(1 - \frac{k}{i}\Bigr)  \approx \Theta(n^{-k}).
$$

\iffalse

-------------------- Another version ------------------

The standard analysis of the Contraction Algorithm considers a fixed minimum $k$-cut $K$ 
and upper-bounds the probability that $K$ is eliminated by contraction by $\frac{\lambda_k}{n \lk / 2} =2k/n$.\footnote{It can be made $2(k-1)/n$, but for simplicity let us assume $k$ to be a large constant so the difference between $k-1$ and $k$ is not important.} 
This bound is tight when the graph with $i$ vertices during the Contraction Algorithm has exactly $i \lk / 2$ edges, 
so let us assume that it is the case for $i = n, \dots, \alpha n$ for some $\alpha > 0$. 
With this assumption, for $i = n, \dots, \alpha n$, each medium cut is eliminated with probability {\em at least} $\frac{\lk}{i \lk/2} = \frac{2}{i}$,
so when $i = \alpha n$, the standard analysis implies that it survives with probability at most $\alpha^2$.
Since the total number of medium cuts is $cn$ for some constant $c$ depending on $k$, for small constant $\alpha > 0$ depending on $c$,
we will have $i = \alpha n \gg  \alpha^2 c n$ and the remaining number of medium cuts is significantly smaller than the number of vertices,
and the number of edges is lower bounded by roughly $i \lk$. 
From this point, the probability that $K$ is eliminated can be upper-bounded by $\frac{\lambda_k}{n \lk} = k / n$, and 
$K$ thus survives with probability roughly
\[
\prod_{i=k+1}^{\alpha n} \bigg(1 - \frac{k}{i}\bigg)  \approx \Theta(n^{-k}).
\]

--------------------
\fi

To show this formally, we need to track the number of
medium cuts remaining in the residual graphs produced by the
Contraction Algorithm. There are two main obstructions to turning the analysis we have sketched above into a rigorous proof.  First, many of our bounds made unwarranted assumptions
about the parameter sizes; for example, we only know \emph{lower
  bounds} on the edge counts, and we should not assume that these hold
with equality in each iteration. Second, the Contraction Algorithm is
a \emph{stochastic} process; we cannot assume that relevant
quantities (such as the number of medium cuts) equal their expectations.

To overcome these challenges, we adopt a proof strategy of \cite{HarrisSrinivasan}. First, using a number of heuristic worst-case assumptions, and relaxing the discrete stochastic process to a continuous-time system of differential equations, we make a \emph{guess} as to the correct dynamics of the Contraction Algorithm. This gives us a formula for the probability that $K$ is selected, given that the process has reached some iteration $i$ and currently has some given number of residual medium cuts. Next, we use induction to prove that this formula holds in the worst case. For this, we take advantage of the fact that our guessed formula has nice convexity and monotonicity properties.

Let us contrast our proof strategy with the analysis in a preliminary
version of this paper~\cite{GuptaLL20}. In this work, we analyze the
Contraction Algorithm as edges are contracted one at a time. In contrast,
\cite{GuptaLL20} considered an alternate viewpoint
where each edge is \emph{independently} contracted with some given
probability, which is equivalent to executing \emph{many steps}
of the Contraction Algorithm. (The alternate viewpoint is only taken for the purposes of analysis; the actual algorithm remains the same.)  

In some ways, the alternate viewpoint is simpler, since it preserves
many independencies among edges and since a number of relevant
parameters are concentrated. However, a drawback is that it lacks fine
control of precisely how many edges to contract. When the number of
vertices in the graph becomes small, the independent-contractions viewpoint introduces larger errors compared to our one-at-a-time approach.
For example, the preliminary version showed a bound of $n^{k} k^{O(k^2 (\log \log n)^2 )}$ on the number of $k$-cuts; compare
this to the tighter bound of $n^{k} k^{O(k^2)}$ from \Cref{thm:main}.

\subsubsection{Extremal Result}
Recall our target extremal statement $(\star)$: there are $O_k(n)$
many medium cuts in the graph, i.e. $2$-cuts of weight less than $2 \lk$.  To show this, we consider two different cases.

In the first case, suppose the medium cuts all correspond to small vertex sets.   Our key observation is that the $k$-cut structure of the graph  forbids
certain types of sunflowers in the set
family corresponding to the medium cuts; on the other hand, estimates from the Sunflower Lemma would ensure that if there are many medium cuts, then such a sunflower would be forced to exist.

  For, consider a $k$-sunflower of medium cuts
$S_1,S_2,\lds,S_k$, in which the core $C$ is a 2-cut of weight at least $\frac{k}{k-1} \lk$.  (Handling cases where the core is empty or corresponds to a smaller 2-cut are details we defer to the actual proof.)  Suppose we contract $C$ as
well as each petal $P_i = S_i \setminus C$ to single vertices $c$ and
$p_i$ respectively. To provide intuition, let us suppose that there are the same number of edges $r$ between the core and each petal, and let $a \geq \frac{k}{k-1} \lk$ denote the degree of $c$ itself; clearly $r \leq a/k$. See \Cref{fig:1}~right.

Since each set $S_i$ is a medium cut, there are less than $2 \lk$ edges from $\{c, p_i \}$ to $V \setminus \{c, p_i \}$. So $\deg(p_i) < 2 \lk - a +
2 r$ for all $i$ and consequently,  the $k$-cut
$\{p_1, \dots, p_{k-1}, V \setminus \{p_1, \dots, p_{k-1} \} \}$  has
 weight at most $\sum_{i=1}^{k-1} \deg(p_i) < (k-1) \cdot
(2 \lk -a + 2 r) \leq (k-1)(2 \lk - a + 2 a/k)$. Due to our bound on $a$, this is at most $k \lk = \la_k$; this is a contradiction since $\la_k$ is the minimum $k$-cut value.

In the second case, suppose there is a medium cut $S$ where both halves involve many vertices.  Then consider a maximal sequence of medium cuts $S_1, \dots, S_{\ell}$ starting with $S_1 = S$,  such that the Venn diagram of $S_1, \dots, S_{\ell}$ has at least $2 \ell$ regions.  See \Cref{fig:1}~left. From this, we can form two subgraphs where every atom of the Venn diagram of $S_1, \dots, S_{\ell}$ in each half of $S$ gets contracted to a single vertex.  It can be shown that every medium cut of the original graph is preserved in at least one of the two graphs. Also, the fact that both halves of $S$ have many vertices ensures that the contracted graphs are strictly smaller than the original graph.  We get our desired bound by induction on $n$.

\begin{figure}
\centering
\includegraphics[scale=.4]{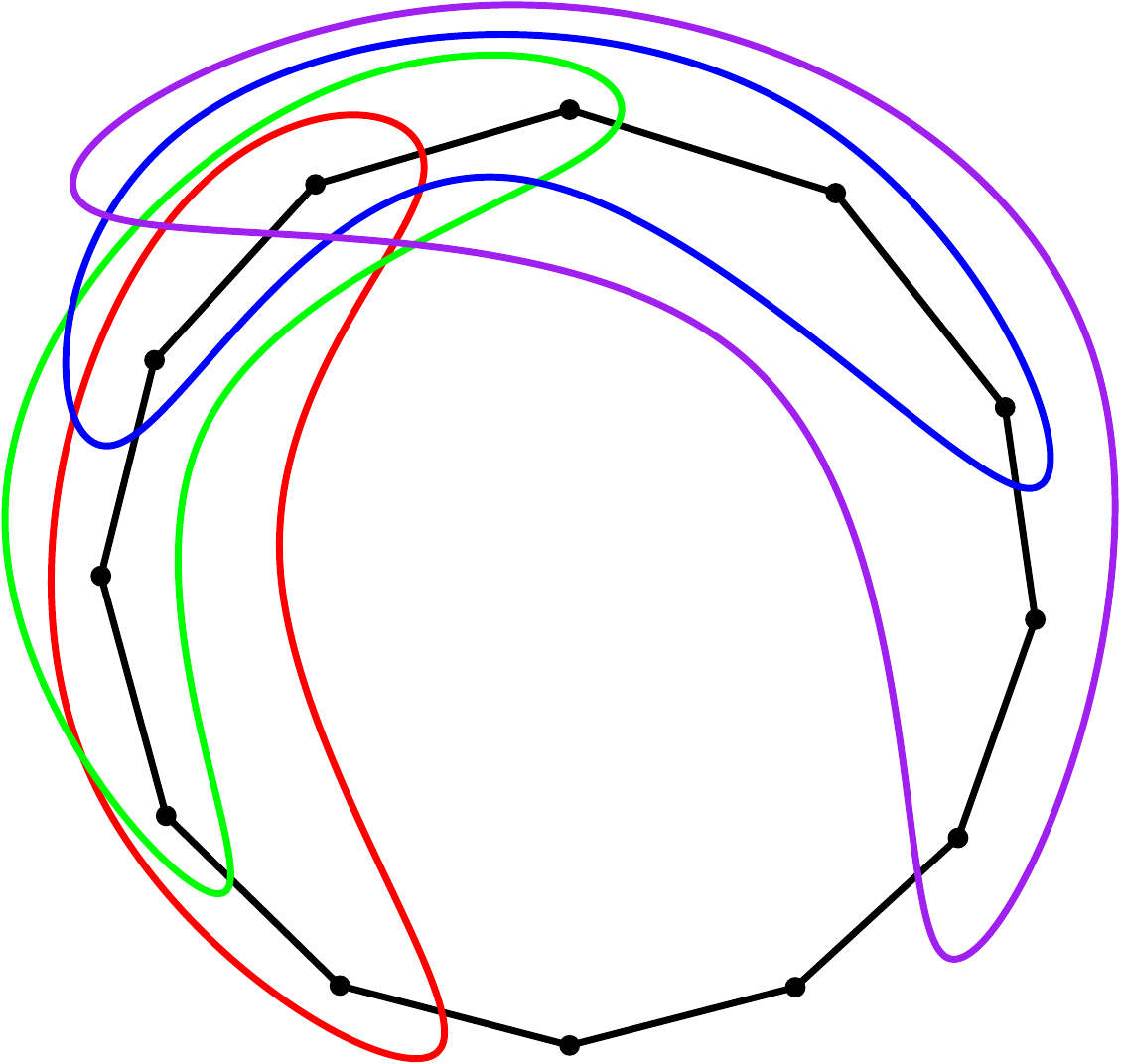}
\qquad
\includegraphics[scale=1]{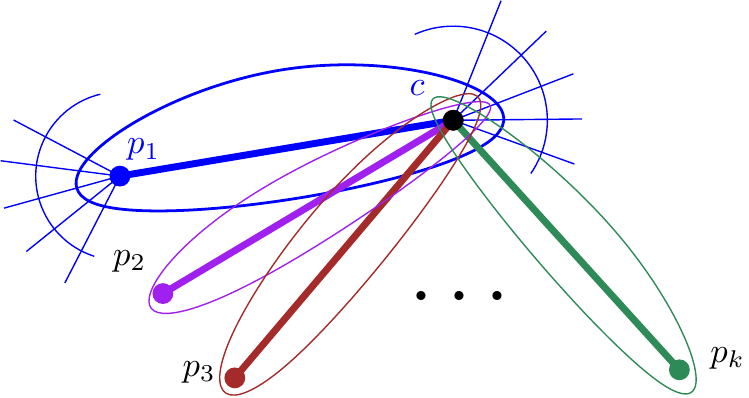}
\caption{\small \emph{Left: To illustrate, suppose $k = 8$ and all $\bn n2$ 2-cuts
  of the cycle have weight less than $2 \lk$. Then, we select $\ell=4$
 such 2-cuts as shown. Their Venn diagram has $2 \ell = 8$ nonempty atoms
  and form an $8$-cut with cost less than 
  $\ell \cd 2 \lk = 8\lk=\la_k$. \quad Right: 
  A $k$-sunflower with core and petals
  consisting of single vertices. Here $c$ has degree $a  \geq\frac{k}{k-1}\lk$ and each  bolded edge has weight $r = a / k = \frac{\lk}{k-1}$. A $k$-cut generated by $k-1$ of the vertices $p_i$ then 
  has weight less than $\lambda_k$.
  }}
\label{fig:1}
\end{figure}

%\subsubsection{Comparison To \cite{GLL20}}

% A preliminary version of this paper~\cite{GLL20} presented slightly
% weaker versions of the above theorems---e.g., the number of minimum
% $k$-cuts was shown to be at most $n^{k} (k \log n)^{O(k^2 \ln \ln n)}$
% instead of the bound of $n^{k} e^{-O(\alpha k^2 \log k)}$ from
% \Cref{thm:main}. Also, the analysis was based on an equivalent
% continuous view of the algorithm, and on tracking the number of
% vertices and medium-weight cuts during this continuous process. In
% contrast, we use the inductive analysis outlined above. While the two
% are similar in spirit, 

\subsection{Outline}

In \Cref{sec:sunflower}, we discuss the Sunflower Lemma. For our result, we need a slightly strengthened version of this lemma, which involves showing the existence of \emph{multiple sunflowers} and ensuring their cores are nonempty.

In \Cref{sec:extremal-bounds-cuts}, we record some elementary bounds and definitions of cuts and $k$-cuts in the graph. In \Cref{sec:extremal}, we use these for our main extremal bound on the number of medium cuts.

In \Cref{sec:contraction-process}, we provide an overview of the Contraction Algorithm and some simple bounds on the probability that cuts survive it. In \Cref{sec:small-cuts}, we carry out the more involved analysis of how the number of medium cuts evolves during the Contraction Algorithm.

In \Cref{sec:final}, we conclude with our main results on the behavior of the Contraction Algorithm and the Recursive Contraction Algorithm.

\iffalse
Given $k$ sets
$F_1, \dots, F_k$ from some universe $U$,  we denote their {\em Venn diagram} by
$\Venn(F_1, \lds, F_k)$.  An {\em atom} denotes a nonempty region of the
diagram, i.e., a nonempty set that can be expressed as
$G_1 \cap \lds \cap G_k$, where each set $G_i$ is either
 $F_i$, or its complement $U \setminus F_i$. 
 See \Cref{fig:venn}. Given vertex sets $V_1, \dots, V_t$ whose Venn diagram has atoms $A_1, \dots, A_{\ell}$, we say that $V_1, \dots, V_t$ \emph{generate} the $\ell$-cut $K = \{A_1, \dots, A_{\ell} \}$. Observe that the weight of $K$ is at most $|\partial V_1 | + \dots + |\partial V_t|$. 
\begin{figure}[H]\centering
\includegraphics[scale=0.6]{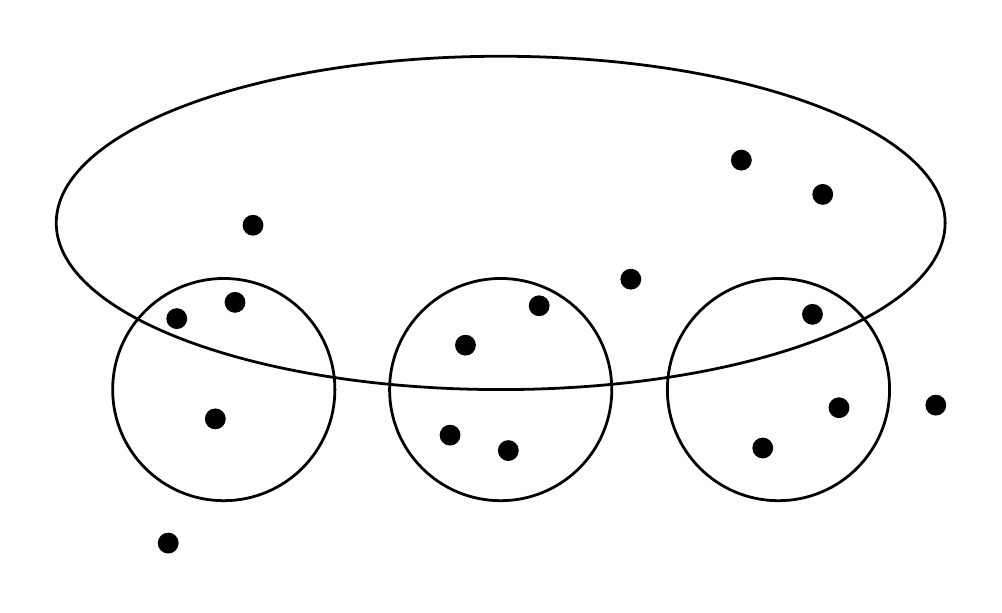}
\caption{The Venn diagram above has eight atoms.}
\label{fig:venn}
\end{figure}
\fi

%\elnote{(recycled) Figure?} 

%%% Local Variables:
%%% mode: latex
%%% TeX-master: "main"
%%% End:

%\input{random-exp}
%\input{random-con}
%\input{random-rec}
%\input{extremal}

\section{Sunflower Lemma and Extensions}
\label{sec:sunflower}
In a set system $\m F$ over a universe $U$, an {\em
  $r$-sunflower} is a collection of $r$ sets $F_1, \dots, F_r \in \m
F$ which all share the same pairwise intersection. That is, there is a {\em core} $C\subseteq U$ such that $F_i \cap F_j = C$ for all $i, j$, and hence $\bigcap_i F_i = C$.
Let $\sun(d, r)$ be the smallest number such that any set system with more than $\sun(d, r)$ sets of cardinality at most $d$ must have an $r$-sunflower.
The classical bound of Erd\H{o}s and Rado~\cite{ER60} shows that $\sun(d, r) \leq d!(r - 1)^d$.
A recent breakthrough by Alweiss et al.~\cite{alweiss2019improved} shows
that
\begin{gather}
  \sun(d, r) \leq (\log d)^d (r \cdot \log \log d)^{O(d)}. \label{eq:sunf}
\end{gather}
While we use this improved bound, it only changes 
lower-order terms: the older Erd\H{o}s-Rado bound would give the same
asymptotics for our applications.

%\BC\corl{sunflower}
%Let $\m F$ be a family of sets over some universe, where every set has size at most $d$. If $|\m F|> \sun(d, r)$, then $\m F$ contains an $r$-sunflower.
%\EC
%\BP
%Group the sets in $\m F$ by their sizes, which range from $0$ to $d$. For some $d'\in[0,d]$, there are more than
%\[\f1{d+1} \cdot \big((d+1) \,\sun(d, r) \big) \ge \sun(d', r) \]
%sets of size exactly $d'$, since $\sun(d, r)$ is monotone in $d$. The result follows from applying the definition of $\sun(d', r)$ on the sets in $\m F$ of size $d'$.
%\EP

For our applications for cuts, we want multiple sunflowers with distinct nonempty cores. (The cores may intersect, even though they are distinct.)  The bound must then depend on the universe size $N$, since the system consisting of $N$ singleton sets has no sunflowers with nonempty core. The following results show that we can guarantee a nonempty core by multiplying the bound by $N$.

\begin{proposition}
\label{sunflower}
Let $\m F$ be a family of nonempty sets over a universe of $N$ elements, where every set has size at most $d$. If $|\m F|> \sun(d, r) \cdot N$, then $\m F$ contains an $r$-sunflower with nonempty core.
\end{proposition}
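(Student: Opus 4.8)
The plan is to combine a one-line averaging step with the ordinary Sunflower Lemma used as a black box. For each element $x$ of the universe $U$, let $\m F_x := \{F \in \m F : x \in F\}$ denote the subfamily of sets containing $x$. Since every set in $\m F$ is nonempty, each $F \in \m F$ is counted in $\m F_x$ for at least one $x$, so
\[
\Sum_{x \in U} |\m F_x| \;=\; \Sum_{F \in \m F} |F| \;\ge\; |\m F| \;>\; \sun(d, r)\cd N.
\]
Averaging over the $N = |U|$ choices of $x$, there must be some $x^* \in U$ with $|\m F_{x^*}| > \sun(d, r)$.

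Next I would apply the Sunflower Lemma to $\m F_{x^*}$. Its members are distinct sets, each of size at most $d$, and there are more than $\sun(d, r)$ of them, so by the definition of $\sun(d,r)$ it contains an $r$-sunflower $F_1, \lds, F_r$ with core $C = \bigcap_i F_i$. Since $x^* \in F_i$ for every $i$, we have $x^* \in C$, hence $C \ne \emptyset$, which is exactly the desired conclusion.

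I do not anticipate a real obstacle here. The only points to verify are that restricting to the sets through a fixed element $x^*$ preserves the hypotheses of the Sunflower Lemma — the cardinality bound $d$ is unchanged, and $\m F_{x^*}$ is still a genuine set system of distinct sets — and that the strict inequality $|\m F| > \sun(d,r)\cd N$ is exactly what is needed to force one of the $N$ subfamilies to exceed $\sun(d,r)$ in size. The shared element then automatically guarantees the core is nonempty, so no extra work (e.g.\ deleting $x^*$ and recursing) is required.
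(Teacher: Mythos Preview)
Your proof is correct and follows essentially the same approach as the paper: both pigeonhole over the subfamilies $\m F_x$ of sets containing a fixed element, find one with $|\m F_{x^*}| > \sun(d,r)$, and then observe that any sunflower in $\m F_{x^*}$ has $x^*$ in its core. Your double-counting step $\sum_x |\m F_x| = \sum_F |F| \ge |\m F|$ is a slightly more explicit version of the paper's one-line averaging, but the argument is identical.
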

\BP
For each element $v$ of the universe, consider the set system $\m F_v:=\{F\in\m F:F\ni v\}$. Since every set in $\m F$ is included in some $\m F_v$, there must be some element $v$ with $ |\m F_v| \geq |\m F| / N > \sun(d,r)$. Thus, there is an $r$-sunflower in $\m F_v$ and hence $\m F$.  The core is nonempty since it contains $v$.
\EP

\BL\leml{sunflower-nonempty}
Let $\m F$ be a family of nonempty sets over a universe of $N$ elements, where every set has size at most $d$. If $|\m F|> \sun(d, r) \cdot  sN$, then $\m F$ contains $s$ many $r$-sunflowers, each with a distinct, nonempty core.
\EL
\BP
We show this by induction on $s$. The base case $s=0$ is vacuous. For the induction step with $s \geq 1$, consider a maximal nonempty set
$C$ such that $\m F$ contains an $r$-sunflower with core
$C$; this exists by \Cref{sunflower} since $|\mathcal F| > \sun(d,r) \cdot sN \geq \sun(d,r) N$.

We claim that the set system $\m F_C:=\{F\in\m F:F\supseteq C\}$
has size at most $\sun(d, r) \cdot N$. For, if not, then
applying~\Cref{sunflower} to the set system $\{F \sm C:F\in\m
F_C \}$ (which has the same cardinality as $\m F_C$) would give
an $r$-sunflower $S_1,\lds,S_r$ with nonempty core
$C'$. The sets $S_1\cup C,\lds,S_r\cup C$ in $\m F$ then form an
$r$-sunflower with core $C\cup C'$, contradicting maximality of $C$.

Now consider the set system $\mathcal F' = \mathcal F \setminus \mathcal F_C$. It has size $|\mathcal F| - | \mathcal F_C| > \sun(d,r) \cdot sN - \sun(d,r) \cdot N = \sun(d,r) \cdot (s-1) N$.  By the induction hypothesis, it has $s-1$ many $r$-sunflowers with distinct nonempty cores. These cores are all distinct from $C$, since no sets containing $C$ remain in $\mathcal F'$. Combining them with the $r$-sunflower of core $C$ gives  $s$ many $r$-sunflowers with distinct, nonempty cores.
\EP

\section{Simple Bounds and Definitions for Cuts}
\label{sec:extremal-bounds-cuts}

We assume throughout we have a fixed value $k \geq 3$.  A $k$-cut $K$ is a partition of $V$ into $k$ nonempty sets,
and we let $\partial K$ denote the set of edges crossing different parts of $K$.  The \emph{weight} of $K$ is the cardinality of the edge set $\partial K$.  We let $\lambda_k$ be the minimum weight of any $k$-cut, and $\lk := \lambda_k / k$.  

A $2$-cut $\{C, V \setminus C \}$ will often simply be called a
\emph{cut}, and we often denote it merely by $C$.   The \emph{shore} of the cut is whichever of the sets $C$ or $V \setminus C$ is smaller. (If they are the same size, choose one arbitrarily), and the \emph{shoresize}  is the cardinality of the shore.

For vertex sets $A, B$ we let $E(A,B)$ denote the set of edges crossing from $A$ to $B$. We also write $\partial S = E(S, V \setminus S)$ for a set $S \subseteq V$.

We define a \emph{small cut} to be a cut $C$ with $$
|\partial C| < \frac{k}{k-1} \lk,
$$
 and we define a \emph{medium cut} to be a
cut $C$ such that
$$
  \frac{k}{k-1} \lk \leq |\partial C| < 2 \lk. 
$$

Given vertex sets $F_1, \dots, F_{t}$, we denote their {\em Venn diagram} by
$\Venn(F_1, \lds, F_t)$.  An {\em atom} denotes a nonempty region of the
diagram, i.e., a nonempty set that can be expressed as
$G_1 \cap \lds \cap G_t$, where each set $G_i$ is either
 $F_i$, or its complement $V \setminus F_i$. 
 See \Cref{fig:venn}.
 
\begin{figure}[H]\centering
\includegraphics[scale=0.6]{}
\caption{The Venn diagram above has eight atoms.}
\label{fig:venn}
\end{figure}

We say that $F_1, \dots, F_t$ \emph{generate} the $\ell$-cut $K = \{A_1, \dots, A_{\ell} \}$ where $A_1, \dots, A_{\ell}$ are the atoms of $\Venn(F_1, \dots, F_t)$. Observe that the weight of $K$ is at most $|\partial F_1 | + \dots + |\partial F_t|$.

We begin with a few straightforward bounds.
\iffalse
\begin{proposition}
\label{cut-count}
There are at most $k^n / k!$ many $k$-cuts.
\end{proposition}
\begin{proof}
We can enumerate the $k$-cuts by choosing one of the $k$ classes for each vertex $v$. This gives $k^n$ choices. (These do not all correspond to $k$-cuts, since some vertex classes may be empty.) Each $k$-cut is counted exactly $k!$ times, corresponding to all possible labels of its classes.
\end{proof}
\fi

\begin{proposition}
 \label{m-bound0}
 If $n \geq k$, then $m \geq \frac{n k}{2(k-1)} \lk$.
\end{proposition}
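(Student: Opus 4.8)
The plan is to prove the lower bound $m \geq \frac{nk}{2(k-1)}\lk$ on the number of edges by relating the degrees of vertices to the minimum $k$-cut weight $\lambda_k$. The key observation is that since $n \geq k$, we can form a valid $k$-cut by isolating $k-1$ individual vertices (making them singleton parts) together with the remaining $n-k+1$ vertices as the last part. The weight of any such $k$-cut is at most the sum of the degrees of the $k-1$ isolated vertices, since every edge crossing this partition is incident to at least one of those singletons.

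More precisely, first I would order the vertices $v_1, \dots, v_n$ so that $\deg(v_1) \leq \deg(v_2) \leq \cdots \leq \deg(v_n)$, and consider the $k$-cut that isolates the $k-1$ smallest-degree vertices $v_1, \dots, v_{k-1}$. Its weight is at most $\deg(v_1) + \cdots + \deg(v_{k-1})$, and this weight is at least $\lambda_k = k\lk$ by minimality of $\lambda_k$. Hence the $k-1$ smallest degrees sum to at least $k\lk$, which means the average degree among these is at least $\frac{k}{k-1}\lk$, and therefore every vertex has degree at least $\frac{k}{k-1}\lk$ (since these were the smallest). Summing over all $n$ vertices gives $\sum_{v} \deg(v) \geq \frac{nk}{k-1}\lk$, and since $\sum_v \deg(v) = 2m$, we conclude $m \geq \frac{nk}{2(k-1)}\lk$.

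I don't anticipate a serious obstacle here; the only mild subtlety is making sure the isolating partition is a genuine $k$-cut, which requires $n \geq k$ so that the $(k-1)$ singleton parts plus the leftover part are all nonempty — this is exactly the hypothesis. One could also phrase the argument slightly more directly: for \emph{any} vertex $v$, isolating $v$ together with any $k-2$ other vertices yields a $k$-cut of weight at most $\deg(v)$ plus the degrees of those others, but the cleanest route is the sorted-degree argument above, which immediately yields that the minimum degree is at least $\frac{k}{k-1}\lk$ and then sums up.
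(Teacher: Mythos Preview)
Your approach is the same as the paper's, but there is a genuine slip in the middle step. From the fact that the $k-1$ smallest degrees sum to at least $k\lk$ you conclude that \emph{every} vertex has degree at least $\frac{k}{k-1}\lk$. This does not follow: the average of the $k-1$ smallest degrees being at least $\frac{k}{k-1}\lk$ only tells you that the \emph{largest} among them, namely $\deg(v_{k-1})$, is at least $\frac{k}{k-1}\lk$; the smaller ones may well fall below this threshold. A concrete counterexample for $k=3$: take a triangle on $a,b,c$ with a pendant vertex $d$ attached to $a$. Then $\lambda_3=3$, so $\frac{k}{k-1}\lk=\tfrac32$, yet $\deg(d)=1$.

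The fix is immediate and is exactly what the paper does: do not try to lower-bound every degree individually. Instead split the sum
\[
2m = \bigl(\deg(v_1)+\cdots+\deg(v_{k-1})\bigr) + \bigl(\deg(v_k)+\cdots+\deg(v_n)\bigr).
\]
The first block is at least $\lambda_k$ directly. For the second block, each $\deg(v_i)$ with $i\ge k$ satisfies $\deg(v_i)\ge\deg(v_{k-1})\ge \frac{\lambda_k}{k-1}$ (the last inequality because $\deg(v_{k-1})$ is at least the average of the first $k-1$ degrees). This gives $2m\ge \lambda_k + (n-k+1)\cdot\frac{\lambda_k}{k-1} = \frac{n\lambda_k}{k-1}$, which is the desired bound. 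So your plan is right, only the bookkeeping in the summation needs this adjustment.
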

\begin{proof}
Sort the vertices in ascending order of degree, so $\deg(v_1) \leq \deg(v_2) \leq \dots \leq \deg(v_n)$.  
The $k$-cut generated by the singleton sets $\{v_1 \}, \dots, \{v_{k-1} \}$ has weight at most $\deg(v_1) + \dots + \deg(v_{k-1})$; since $\lambda_k$ is the minimum $k$-cut, we thus have $\deg(v_1) + \dots + \deg(v_{k-1}) \geq \lambda_k$. Also, because of the sorted vertex ordering,  we have $\deg(v_i) \geq \deg(v_{k-1}) \geq \frac{\deg(v_1) + \dots + \deg(v_{k-1})}{k-1} \geq \frac{\lambda_k}{k-1}$ for all $i \geq k$.

Summing vertex degrees, the total number of edges $m$ is given by
\begin{align*}
2m &= \bigl( \deg(v_1) + \dots + \deg(v_{k-1}) \bigr) + \bigl( \deg(v_k) + \dots + \deg(v_n) \bigr) \\
&\geq \lambda_k + (n-k+1) \cdot \lambda_k / (k-1) = n \lambda_k / (k-1) = n k \lk / (k-1). \qedhere
\end{align*}
\end{proof}

\begin{lemma}\label{lemma:small}
There are fewer than $2^{k-2}$ small cuts.
\end{lemma}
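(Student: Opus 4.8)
The plan is to show that every small cut is a union of the atoms of one common Venn diagram having few atoms, and then count how many distinct $2$-cuts such a structure can support. Let $\mathcal{S}$ be the family of all small cuts, let $A_1,\dots,A_\ell$ be the atoms of $\Venn(\mathcal S)$, and recall that each $C\in\mathcal S$ is a union of these atoms, so the $2$-cut $\{C,V\setminus C\}$ is determined by a partition of the $\ell$-element set $\{A_1,\dots,A_\ell\}$ into two nonempty blocks. There are only $2^{\ell-1}-1$ such partitions, so $|\mathcal S|\le 2^{\ell-1}-1$, and it remains to prove $\ell\le k-1$, which then yields $|\mathcal S|\le 2^{k-2}-1<2^{k-2}$.

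To prove $\ell\le k-1$, I would argue by contradiction: assume $\ell\ge k$ and greedily extract a small subfamily whose Venn diagram already has at least $k$ atoms. Starting from the trivial partition $\{V\}$ (one block), repeatedly add a cut from $\mathcal S$ that strictly refines the current partition, stopping once the partition has at least $k$ blocks. Such a refining cut always exists while the block count is below $k$, since otherwise every cut of $\mathcal S$ would be a union of the current (fewer than $k$) blocks, forcing $\Venn(\mathcal S)$ to have fewer than $k$ atoms and contradicting $\ell\ge k$. If $C_1,\dots,C_j$ are the cuts added and $1=\ell_0<\ell_1<\dots<\ell_j$ are the block counts along the way (strictly increasing, since each $C_i$ strictly refines), then $\ell_{j-1}\le k-1$ because the loop did not stop earlier; hence $\ell_0,\dots,\ell_{j-1}$ are $j$ distinct integers in $\{1,\dots,k-1\}$, so $j\le k-1$. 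Now $C_1,\dots,C_j$ generate a partition into $\ell_j\ge k$ atoms, and merging atoms down to exactly $k$ parts can only decrease the cut weight, so we obtain a $k$-cut of weight at most $\sum_{i=1}^{j}|\partial C_i|<(k-1)\cdot\tfrac{k}{k-1}\lk=\lambda_k$, contradicting the minimality of $\lambda_k$. Therefore $\ell\le k-1$.

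The Venn-diagram facts used here — that each member of the family is a union of atoms, and that merging atoms does not increase the weight of the generated cut — are precisely the "generate" machinery recorded just before this lemma, and the final counting step is immediate. The only place that needs a little care is the greedy extraction, where I must check both that it can always proceed as long as there are fewer than $k$ blocks and that it terminates within $k-1$ steps; I expect that bookkeeping to be the main (though quite mild) obstacle.
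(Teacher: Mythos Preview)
Your proof is correct and follows essentially the same approach as the paper: both greedily select at most $k-1$ small cuts whose Venn diagram has at least $k$ atoms, then obtain a $k$-cut of weight below $\lambda_k$ for a contradiction. Your packaging --- first bounding $|\mathcal S|\le 2^{\ell-1}-1$ and then showing the full Venn diagram has $\ell\le k-1$ atoms --- is a minor repackaging of the paper's direct contradiction from the assumption $|\mathcal S|\ge 2^{k-2}$, with the same greedy refinement at its core.
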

\begin{proof}
Suppose not; in this case, we will construct a $k$-cut of weight less than $\la_k$, which contradicts the definition of $\la_k$.

For $i = 1, \dots, k-1$, let us choose an arbitrary small cut $S_i$ such that $|\Venn(S_1,\ldots,S_i)| \geq i + 1$. We claim that we can always find such an $S_i$.  For, if $|\Venn(S_1,\ldots,S_{i-1})| \geq i + 1$, then $S_i$ can be chosen arbitrarily. Otherwise, suppose that $\Venn(S_1,\ldots,S_{i-1})$ has precisely $i$ atoms $A_1, \dots, A_i$. The only small cut $T$ such that $|\Venn(S_1,\ldots,S_{i-1},T)| = i = |\Venn(S_1,\ldots,S_{i-1})|$ would have the form $T = \bigcup_{j \in I}A_j$ for some subset $I\s \{1, \dots i \}$. There are at most $2^{i-1} - 1$ such cuts (keeping in mind that $I$ and its complement determine the same cut). Since by assumption there are at least $2^{k-2}$ small cuts, there exists a small cut $S_i$ with $\Venn(S_1,\ldots,S_{i-1}, S_i) > i$ as desired.

At the end, we have $|\Venn(S_1,\ldots,S_{k-1})| \geq k$. So the small cuts $S_1, \dots, S_{k-1}$ generate a $t$-cut for $t \geq k$ whose weight is less than $(k-1) \cdot \frac{k}{k-1} \lk  =\la_k$. This is our desired contradiction.
\end{proof}

\begin{proposition}
\label{ell-bound}
Let $T_1, \dots, T_{r}$ be medium cuts where $r = \lceil k/2 \rceil$. Then either $|\Venn (T_1, \dots, T_{r-1})| < 2(r-1)$ or $|\Venn (T_1, \dots, T_{r})| < 2 r$ (or both).
\end{proposition}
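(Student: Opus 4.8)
The plan is a proof by contradiction in the spirit of \Cref{lemma:small}: assuming that \emph{both} $|\Venn(T_1,\dots,T_{r-1})|\ge 2(r-1)$ and $|\Venn(T_1,\dots,T_r)|\ge 2r$, I will exhibit a $k$-cut of weight strictly below $\lambda_k=k\lk$. Write $q:=|\Venn(T_1,\dots,T_{r-1})|$. Since $r=\lceil k/2\rceil$ we have $2r\ge k$ and $2(r-1)\le k-1$, so the two diagrams are just barely large enough to ``build'' a $k$-cut out of these $r$ medium cuts; the content is to keep the total weight below $k\lk$.

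First dispose of the easy range $q\ge k$: the sets $T_1,\dots,T_{r-1}$ generate a $q$-cut, and merging its atoms down to $k$ parts (which never increases weight) yields a $k$-cut of weight at most $\sum_{i<r}|\partial T_i|<2(r-1)\lk\le(k-1)\lk<\lambda_k$, a contradiction. So assume $2(r-1)\le q\le k-1$, let $A_1,\dots,A_q$ be the atoms of $\Venn(T_1,\dots,T_{r-1})$, and recall this $q$-cut has weight $<2(r-1)\lk$. Adjoining $T_r$ splits some number $p=|\Venn(T_1,\dots,T_r)|-q\ge 2r-q\ge 1$ of these atoms, each atom $A$ into $A\cap T_r$ and $A\setminus T_r$; the edges thereby severed inside $A$ form the set $E(A\cap T_r,A\setminus T_r)\subseteq\partial T_r$, and over the $p$ split atoms these edge sets are pairwise disjoint, hence of total size $<|\partial T_r|<2\lk$.

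Now choose the $k-q$ split atoms whose internal severed-edge sets are smallest (one checks $1\le k-q\le p$ using $q\le k-1<k\le 2r\le q+p$) and split exactly those; this produces a genuine $k$-cut $K'$, and since splitting an atom $A$ along $T_r$ adds precisely $|E(A\cap T_r,A\setminus T_r)|$ to the cut weight, an averaging argument gives $|\partial K'|<2(r-1)\lk+\tfrac{k-q}{p}\cdot 2\lk\le 2(r-1)\lk+\tfrac{2(k-q)}{2r-q}\lk$. A short check — splitting on whether $2r=k$ or $2r=k+1$, and using $q\ge 2(r-1)$ — shows the bracketed coefficient of $\lk$ is at most $k$; since every inequality fed in is strict, $|\partial K'|<k\lk=\lambda_k$, the desired contradiction.

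The main obstacle is the odd-$k$ case: there $2r=k+1$, and the naive estimate ``$r$ medium cuts generate a cut of weight $<2r\lk$'' gives only $(k+1)\lk$, which is useless. The fix is to observe that for odd $k$ the hypotheses force $q=k-1$ exactly and $p\ge 2$, so that splitting only the single cheapest split atom costs less than $\lk$ rather than $2\lk$; arranging the averaging step so that it absorbs both parities with the same bound is the delicate point.
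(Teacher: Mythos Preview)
Your proof is correct and follows essentially the same approach as the paper: use the first $r-1$ medium cuts to get a $q$-cut of weight below $2(r-1)\lk$, then split atoms along $T_r$ via an averaging argument to reach exactly $k$ parts while keeping the weight below $k\lk$. The paper handles even $k$ more directly (using all $r$ cuts at once, without ever looking at $q$) and treats odd $k$ separately, whereas you give a unified treatment and defer the even/odd distinction to the final numerical check $2(r-1)+\tfrac{2(k-q)}{2r-q}\le k$; in the odd case your argument specializes exactly to the paper's, since $q=k-1$ forces you to split a single cheapest atom.
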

\begin{proof}
Let us first consider the case where $k$ is even and $r = k/2$. Suppose for contradiction that $|\Venn (T_1, \dots, T_{r})| = t \geq k$. Then $T_1, \dots, T_{r}$ generate a $t$-cut $K$. Since $T_1, \dots, T_{r}$ are medium cuts, the weight of $K$ is less than $r \cdot 2 \lk = \lambda_k$; this contradicts that $\lambda_k$ is the minimum $k$-cut value.

Next consider the case where $k$ is odd and $r = (k+1)/2$. Suppose for contradiction that $|\Venn (T_1, \dots, T_{r})| = t \geq k+1$  and $|\Venn (T_1, \dots, T_{r-1})| = t' \geq k-1$. The sets $T_1, \dots, T_{r-1}$ generate a $t'$-cut $K'$; since $T_1, \dots, T_{r-1}$ are medium cuts, the weight of $K'$ is less than $(r-1) \cdot 2 \lk = \frac{k-1}{k} \lambda_k$. If $t' \geq k$, this contradicts that $\lambda_k$ is the 
minimum $k$-cut value.  So it must be that $t' = k-1$ exactly. 

Let $A_1, \dots, A_j$ be the atoms of $\Venn (T_1, \dots, T_{r-1})$ cut by $T_{r}$; since $t \geq k+1$ and $t' = k-1$ we must have $j \geq 2$. The edge sets $E(T_{r}, A_i \setminus T_r)$ are all disjoint and $T_{r}$ is a medium cut, so at least one atom $A_i$ must satisfy $|E(T_{r}, A_i \setminus T_r) | \leq | \partial T_{r} | / j \leq | \partial T_{r} | / 2 \leq \lk$. The sets $T_1, \dots, T_{r-1}, A_i$ then generate a $k$-cut $K''$ of weight less than  $(r-1) \cdot 2 \lk + \lk = \lambda_k$, contradicting that $\lambda_k$ is the minimum $k$-cut value.
\end{proof}

\section{Bounding the Number of Medium Cuts}
\label{sec:extremal}
 We now analyze the combinatorial structure of the medium cuts  to show the following key bound:
\begin{theorem}
\label{main-extremal-theorem}
There are $k^{O(k)}n$ many medium cuts.
\end{theorem}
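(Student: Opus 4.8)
\textbf{Proof plan for \Cref{main-extremal-theorem}.}

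The plan is to induct on $n$, the number of vertices, and to split the argument into the two cases sketched in the introduction, according to whether or not there exists a medium cut both of whose sides are ``large'' (say, of shoresize exceeding some threshold $D = k^{\Theta(k)}$ to be fixed later). Let $M(n)$ denote the maximum possible number of medium cuts in an $n$-vertex graph, and suppose the claimed bound $M(n') \le c_k n'$ holds for all $n' < n$, where $c_k = k^{O(k)}$.

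\emph{Case 1: every medium cut has shoresize at most $D$.} Here I would apply the strengthened Sunflower Lemma (\Cref{lem:sunflower-nonempty}) to the family $\mathcal F$ of shores of the medium cuts, which are nonempty sets of size at most $D$ over a universe of size $N = n$. If $|\mathcal F| > \sun(D, r) \cdot s \cdot n$ with $r = \lceil k/2 \rceil$ and $s = 2^{k-2}$ (coming from \Cref{lemma:small}), then we obtain $s$ many $r$-sunflowers of medium cuts with distinct nonempty cores. The key structural step is then to argue that no such sunflower can exist, generalizing the argument sketched around \Cref{fig:1}~right: given an $r$-sunflower $S_1, \dots, S_r$ with nonempty core $C$, I would contract $C$ and each petal $P_i = S_i \setminus C$ to single vertices; using that each $S_i$ is a medium cut (so $|\partial(C \cup P_i)| < 2\lk$) together with an averaging argument over the petal-to-core edge counts, bound the total degree of the petal-vertices $p_i$, and thereby produce a cut of small weight. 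When the core $C$ is itself a medium or larger cut this directly yields (via \Cref{ell-bound} / \Cref{proposition} on $\Venn$) a $k$-cut of weight $< \lambda_k$; the residual cases where $C$ is small (or the $\Venn$ count falls short) are handled by folding in enough of the $s = 2^{k-2}$ sunflowers and invoking \Cref{lemma:small}-type bookkeeping to still force $|\Venn| \ge k$. This gives $|\mathcal F| \le \sun(D, r) \cdot 2^{k-2} \cdot n = k^{O(k)} n$, since $D$ and $r$ are functions of $k$ only.

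\emph{Case 2: some medium cut $S$ has both shores of size $> D$.} Following the sketch around \Cref{fig:1}~left, I would greedily build a maximal sequence of medium cuts $S = S_1, S_2, \dots, S_\ell$ with the property that $|\Venn(S_1, \dots, S_\ell)| \ge 2\ell$; maximality and \Cref{ell-bound} will force $\ell \le r - 1 < k$, so the process stops quickly. Contracting, within each of the two halves $S$ and $V \setminus S$, every atom of $\Venn(S_1,\dots,S_\ell)$ to a single vertex produces two graphs $G_1, G_2$. Each has strictly fewer than $n$ vertices — this is exactly where we use that both halves of $S$ are large, so each contraction collapses $> D \ge$ (number of atoms) vertices into boundedly many. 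Then I would show that every medium cut $C$ of $G$ survives (as a medium cut, with weight not increased) in at least one of $G_1, G_2$: because $S_1,\dots,S_\ell$ were chosen maximally, $C$ cannot genuinely subdivide atoms on both sides of $S$, so on one side all of $C$'s atoms are already coarser than the contraction, and $C$ restricted/projected to that $G_i$ is a well-defined cut of no greater weight. (One must also check $\lk$ does not decrease under contraction, i.e.\ $\lambda_k(G_i) \ge \lambda_k(G)$, which holds since contractions only identify vertices. Conversely the bound $|\partial C| < 2\lk(G)$ transfers because the relevant edges are untouched.) Summing, $M(n) \le M(n_1) + M(n_2)$ with $n_1, n_2 < n$ and $n_1 + n_2 \le n + O(k^2)$ or so, and the inductive bound closes, absorbing the additive slack into the constant $c_k$.

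\emph{Main obstacle.} The delicate part is the bookkeeping in Case 1 when the sunflower core is empty or is a \emph{small} cut rather than a medium-or-larger cut: the clean degree-counting contradiction needs the core to ``cost'' at least $\frac{k}{k-1}\lk$ to give a genuine $k$-cut violation, and routing around this (by using multiple sunflowers, or by combining a sunflower with extra medium cuts to push the $\Venn$ atom count up to $k$ while keeping total weight $< \lambda_k$) is where the factor $s = 2^{k-2}$ and the precise choice of $r = \lceil k/2\rceil$ in \Cref{ell-bound} get used. A secondary but real obstacle in Case 2 is verifying carefully that a medium cut of $G$ really does descend to a medium — not merely a 2- — cut of $G_i$ of no larger weight, i.e.\ that the projection is injective on the set of medium cuts up to the $G_1$/$G_2$ split; this is the step that makes the recursion count each medium cut at least once.
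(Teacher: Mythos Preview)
Your two-case strategy matches the paper exactly, and your Case~2 is essentially the paper's argument. But Case~1 has two concrete problems.

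\textbf{The threshold $D$.} You set $D = k^{\Theta(k)}$. The sunflower bound $\sun(D,r)$ is at least exponential in $D$ (even with the Alweiss et al.\ improvement it is $(\log D)^D \cdot (r\log\log D)^{O(D)}$), so with $D = k^{\Theta(k)}$ you get a bound that is doubly exponential in $k$, not $k^{O(k)}n$. The paper takes $D = k$: in Case~2 the number of atoms on each side is at most $i+j \le k$, so ``both shores larger than $k$'' already suffices to make $n_1,n_2 < n$, and in Case~1 the sunflower bound stays $k^{O(k)}$.

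\textbf{The sunflower size $r$.} You take $r = \lceil k/2\rceil$, apparently importing this parameter from \Cref{ell-bound}. But with only $r$ sets in the sunflower, the disjoint petals $P_1,\dots,P_r$ generate an $(r{+}1)$-cut, not a $k$-cut, so bounding $\sum_i |\partial P_i|$ below $\lambda_k$ gives no contradiction. Nor does the Venn route work: the sunflower $S_1,\dots,S_r$ has exactly $r+2$ atoms (core, $r$ petals, outside), and for $k\ge 5$ this is $<2r$, so \Cref{ell-bound} is satisfied rather than violated. The paper instead takes $r = k$. Then $k-1$ disjoint petals generate a genuine $k$-cut, and the calculation
\[
\sum_{i=1}^{k-1}|\partial P_i| \;<\; 2(k-1)\lk - (k-1)|\partial C| + 2\cdot \tfrac{k-1}{k}|\partial C|
\;\le\; k\lk \;=\; \lambda_k
\]
(using $|\partial C|\ge \tfrac{k}{k-1}\lk$) gives the contradiction directly. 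Your worry about small cores is then handled in one line: ask for $s>2^{k-2}$ sunflowers with distinct nonempty cores, and since there are fewer than $2^{k-2}$ small cuts (\Cref{lemma:small}), at least one core is not small. No ``folding in extra medium cuts'' is needed.

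So the plan is right but the two parameters in Case~1 should be $D=k$ and $r=k$; with $r=\lceil k/2\rceil$ the petal argument simply does not reach a $k$-cut.
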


We prove this in two stages. First, using the Sunflower Lemma, we show it for the
special case when all the medium cuts of $G$ have shoresize at most
$k$. We then extend to the general case by an induction on the graph size. 

\begin{lemma}
\label{lsf1}
Suppose the medium cuts all have shoresize at most $k$. Then there are at most $k^{O(k)} n$ medium cuts.
\end{lemma}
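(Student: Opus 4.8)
The plan is to apply the strengthened sunflower lemma (\Cref{lem:sunflower-nonempty}) to the family $\mathcal F$ consisting of the \emph{shores} of all medium cuts. By hypothesis every shore has size at most $d:=k$, and distinct cuts have distinct shores, so $\mathcal F$ is a family of nonempty subsets of the $n$-set $V$, each of size $\le k$, with $|\mathcal F|$ equal to the number of medium cuts; it thus suffices to bound $|\mathcal F|$. We may assume $n>2k$, since otherwise $G$ has fewer than $2^{n}\le 4^{k}=k^{O(k)}\le k^{O(k)}n$ cuts altogether. Set $r:=2k$ and $s:=2^{k-2}$, and suppose for contradiction that $|\mathcal F|>\sun(k,2k)\cdot 2^{k-2}\cdot n$. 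Then \Cref{lem:sunflower-nonempty} yields $2^{k-2}$ many $r$-sunflowers with distinct, nonempty cores. Since the cores have size $\le k$ and $n>2k$, distinct cores define distinct $2$-cuts; as \Cref{lemma:small} gives fewer than $2^{k-2}$ small cuts, at least one of these sunflowers has a core $C$ with $|\partial C|\ge\frac{k}{k-1}\lk$. Fix such a sunflower $S_1,\dots,S_{2k}$ with core $C$, petals $P_i:=S_i\setminus C$, and put $a:=|\partial C|$.

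The crux is then to build from this configuration a $k$-cut of weight strictly below $\lambda_k$. Contract $C$ to a vertex $c$ and each $P_i$ to a vertex $p_i$, obtaining a multigraph $G'$; any $k$-cut of $G'$ pulls back to a $k$-cut of $G$ of the same weight, so it is enough to find a light $k$-cut in $G'$. In $G'$ one has $\deg(c)=a$ and $|E(c,p_i)|=r_i:=|E(C,P_i)|$, and since the nonempty petals are pairwise disjoint subsets of $V\setminus C$, the sets $E(C,P_i)$ are disjoint subsets of $\partial C$, whence $\sum_i r_i\le a$. Also $S_i=\{c,p_i\}$ in $G'$, so being a medium cut gives $a+\deg(p_i)-2r_i=|\partial\{c,p_i\}|<2\lk$, i.e.\ $\deg(p_i)<2\lk-a+2r_i$. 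At most one petal is empty (the one equal to $C$) and at most $k-1$ nonempty petals can have $r_i>a/k$ (else $\sum_i r_i>a$), so among the $\ge 2k-1$ nonempty petals there are $\ge k$ with $r_i\le a/k$; pick any $k-1$ of them, say $P_{i_1},\dots,P_{i_{k-1}}$. The $k$-cut of $G'$ with parts $\{p_{i_1}\},\dots,\{p_{i_{k-1}}\}$ and the rest (nonempty, as it contains $c$) has weight at most $\sum_{j=1}^{k-1}\deg(p_{i_j})<(k-1)\bigl(2\lk-a+\tfrac{2a}{k}\bigr)=(k-1)\bigl(2\lk-\tfrac{(k-2)a}{k}\bigr)$, and plugging in $a\ge\frac{k}{k-1}\lk$ bounds this by $(k-1)\cdot\frac{k}{k-1}\lk=k\lk=\lambda_k$. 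The inequality is strict, contradicting minimality of $\lambda_k$. Hence $|\mathcal F|\le\sun(k,2k)\cdot 2^{k-2}\cdot n$, and since $\sun(k,2k)\le k!\,(2k-1)^k\le k^{O(k)}$ (or by \eqn{sunf}), there are $k^{O(k)}n$ medium cuts.

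I expect the main obstacle to be the final inequality. The naive approach — take $k-1$ arbitrary petals and apply the medium-cut bound directly — does not beat $\lambda_k$ when all we know is $|\partial C|\ge\frac{k}{k-1}\lk$, since a single petal could carry almost all of $\partial C$. Two ingredients are genuinely needed: starting from a sunflower with $r\ge 2k-1$ petals so that an averaging argument extracts $k-1$ petals with $r_{i_j}\le a/k$, and the exact cancellation $2(k-1)-(k-2)=k$ that makes the bound collapse to precisely $\lambda_k$. The degenerate cases — an empty core (ruled out by \Cref{lem:sunflower-nonempty}), a core that is itself a small cut (ruled out by taking $2^{k-2}$ sunflowers and invoking \Cref{lemma:small}), and small $n$ (handled at the outset) — are routine bookkeeping.
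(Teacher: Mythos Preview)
Your proof is correct and follows essentially the same approach as the paper: apply the strengthened sunflower lemma to the family of shores, use \Cref{lemma:small} to locate a sunflower whose core $C$ is not a small cut, and then average over the edge counts $|E(C,P_i)|$ to extract $k-1$ petals generating a $k$-cut of weight strictly below $\lambda_k$. The only cosmetic difference is in the averaging step: the paper takes a $k$-sunflower, sorts the petals by $|L_i|=|E(C,P_i)|$, and uses $\sum_{i=1}^{k-1}|L_i|\le\frac{k-1}{k}|\partial C|$, whereas you take a $2k$-sunflower and pigeonhole for $k-1$ nonempty petals with $r_i\le a/k$---both routes collapse to the identical arithmetic $2(k-1)\lk-(k-2)\lk=\lambda_k$.
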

\begin{proof}
Let $\mathcal F$ be the set family consisting of the shores of the medium cuts.  We claim that $\mathcal F$ cannot have $2^k$ many $k$-sunflowers with distinct nonempty cores.  For, suppose for contradiction that it does so.  Then, by \Cref{lemma:small}, at least one of the
sunflowers has a nonempty core $C$ with $|\partial C|\ge\frac
k{k-1}\bar\lambda_k$.   Let the sets in this sunflower be 
$S_1,\ldots,S_{k}\in\mathcal F$ where $S_i \cap S_j = C$ for $i \neq j$.

Let $P_i := S_i \setminus C$ be the \emph{petal} for each $S_i$, and let $L_i = E(P_i, C)$ denote the set of edges between $P_i$ and $C$. By inclusion-exclusion, we have $|\partial P_i| =  |\partial S_i |-|\partial C| +2 | L_i |$. Since each $S_i$ is a medium cut,  it satisfies $| \partial S_i | < 2 \lk$, so $$
| \partial P_i | < 2 \lk  -|\partial C| +2 | L_i |.
$$

Suppose the petals are sorted in ascending order of $|L_i|$, so that $|L_i| \leq |L_{i+1}|$ for $i = 1, \dots, k-1$. Consider the $k$-cut $K$ generated by the disjoint sets $P_1, \dots, P_{k-1}$. We can bound its weight $|\partial K|$ by:
$$
|\partial K| \leq \sum_{i=1}^{k-1} |\partial P_i| < \sum_{i=1}^{k-1} (2 \lk  -|\partial C| +2 | L_i |) = 2 (k-1) \lk - (k-1) |\partial C| + 2 \sum_{i=1}^{k-1} |L_i|.
$$

Because the sets $L_i$ are pairwise disjoint subsets of $\partial C$ in sorted order of size, we have 
$$
\sum_{i=1}^{k-1} |L_i| \leq \frac{k-1}{k} \sum_{i=1}^{k}  |L_i| \leq \frac{k-1}{k} |\partial C|,
$$
and so 
$$
|\partial K| < 2 (k-1) \lk - (k-1) |\partial C| + 2 \cdot \tfrac{k-1}{k} |\partial C| = 2 (k-1) \lk- \tfrac{(k-1)(k-2)}{k} 
|\partial C|.
$$

Finally, using the bound $|\partial C| \geq \frac{k}{k-1} \lk$, we get
$$
|\partial K| < 2 (k-1) \lk - \tfrac{(k-1)(k-2)}{k} \cdot \tfrac{k}{k-1} \lk = k \lk = \lambda_k.
$$

This contradicts the definition of $\lambda_k$ as the minimum $k$-cut. Thus $\m F$ cannot have $2^k$ many $k$-sunflowers with distinct, nonempty cores. By our hypothesis, the sets in $\m F$ have size at most $k$.  Thus, by \lem{sunflower-nonempty} (with parameters $d = r = k$ and $N = n$ and $s = 2^k$) and Eq.~(\ref{eq:sunf}), this means 
\[
| \m F | \leq \sun(d, r) \cdot s N 
\leq (\log d)^d (r \cdot \log \log d)^{O(d)} \cdot 2^k n
\leq k^{O(k)} n.
\qedhere
\]
\iffalse

Define a petal $P_i$ to be \emph{large} if $| E[C,P_i] | \geq \frac{\bar\lambda_k}{k-1}$, and \emph{small} otherwise. Since there are $3k$ petals, we have two cases.  First, suppose there at least $2k$ large petals $P_{i_1},\ldots,P_{i_{2k}}$, and let $P_{i_0}$ be another petal. Since $E[C,P_{i_j}]\subseteq \partial S_{i_0}$ 
and the sets $E[C,P_{i_j}]$ are all disjoint, we have
\[ 2\bar\lambda_k \ge | \partial S_{i_0} | \geq \sum_{j=1}^{2k} \bigl|  E[C,P_{i_j}]  \bigr| \geq 2 k \cdot \frac{\bar\lambda_k}{k-1} > 2\bar\lambda_k ,\]
a contradiction. 

Second, suppose there are at least  $k-1$ small petals. Each small petal $P_i$ satisfies
\begin{align*}
  |\partial P_i| &= |\partial(C\cup P_i)|-|\partial C| +2 | E[C,P_i] | < 2\bar\lambda_k-\frac k{k-1}\bar\lambda_k+\frac{2 \bar\lambda_k}{k-1} = \frac{k}{k-1} \bar\lambda_k .
\end{align*}
Taking $k-1$ of the small petals, which are disjoint, induces a $k$-cut of size less than $k \lk = \lambda_k$, contradicting the definition of $\lambda_k$ as the minimum $k$-cut.

Either way, we have shown a contradiction. Thus $\m F$ cannot have $2^k$ many $3k$-sunflowers with distinct, nonempty cores. Note also that sets in $\m F$ have size at most $k$ by our hypothesis.  By \lem{sunflower-nonempty} (with parameters $d = k, r = 3k, s = 2^k$) and Eq.~(\ref{eq:sunf}), this means that 
\[
| \m F | \leq \sun(d, r) \cdot sn 
\leq (\lg d)^d (r \cdot \lg \lg d)^{O(d)} \cdot 2^k n
\leq k^{O(k)} n.
\qedhere
\]
\fi
\end{proof}

We will next remove the restriction on the shoresize, completing the proof. 
\begin{proof}[Proof of \Cref{main-extremal-theorem}]

We will show by induction on $n$ that for $n > k$ there are at most $c_k (n - k)$ medium cuts in any graph $G$, for some constant $c_k = k^{O(k)}$.

If every medium cut has shoresize at most $k$, then we have already shown this in \Cref{lsf1} for appropriate choice of $c_k$. (This covers the base case of the induction $n = k+1$.) We thus consider a medium cut $S$ with shoresize larger than $k$, i.e. $k < |S| < n-k$.  

Starting with $S_1 = S$, let us form a maximal sequence of medium cuts $S_1, S_2, \dots, S_\ell$  with the property that $|\Venn (S_1, \dots, S_t)| \geq 2t$ for all $t = 1, \dots, \ell$; here $\ell \geq 1$ since $|\Venn (S)| = 2$. Let the atoms of $\Venn (S_1, \dots, S_{\ell})$ inside $S$ (respectively, outside $S$) be $A_1, \dots, A_i$ and $B_1, \dots, B_j$. So $A_1 \cup \dots \cup A_i = S$ and $B_1 \cup \dots \cup B_j = V \setminus S$.    

Now form a graph $H_1$ by contracting each of the atoms $A_1, \dots, A_i$ and likewise form a graph $H_2$ by contracting each of the atoms $B_1, \dots, B_j$.  Since $A_1, \dots, A_i$ partition $S$ and $B_1, \dots, B_j$ partition $V \setminus S$, these graphs have $n_1 = (n- |S|) + i$ and $n_2 = |S| + j$ vertices respectively.  See \Cref{fig:ex} for an example. 

\begin{figure}[h]\centering
\includegraphics[scale=.5]{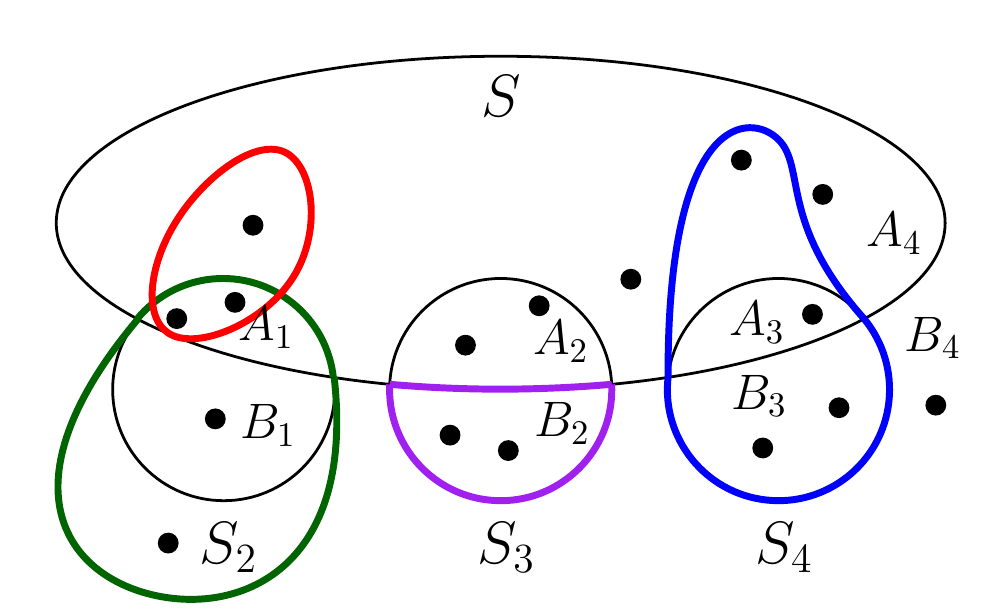}
\includegraphics[scale=.5]{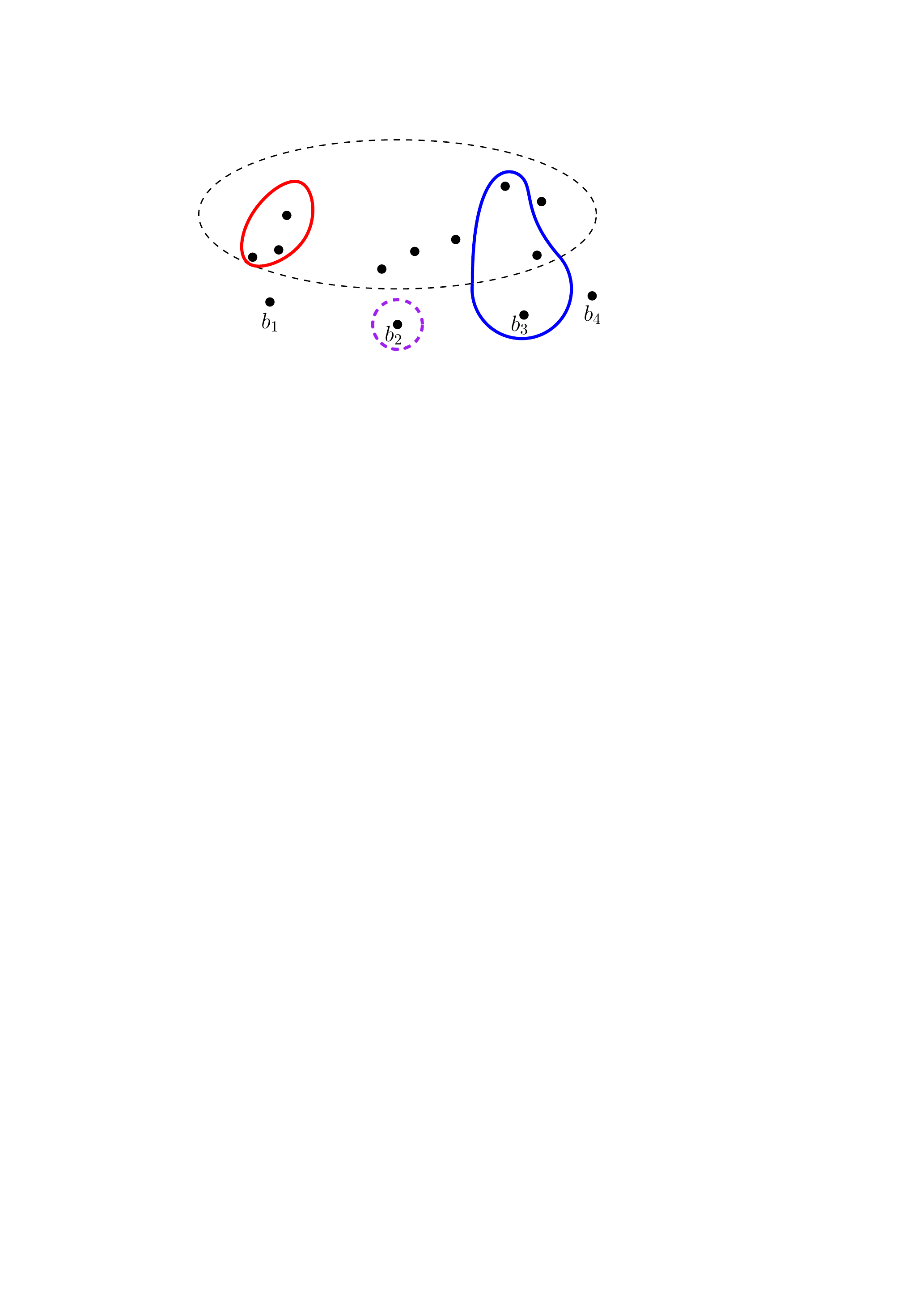}
\includegraphics[scale=.5]{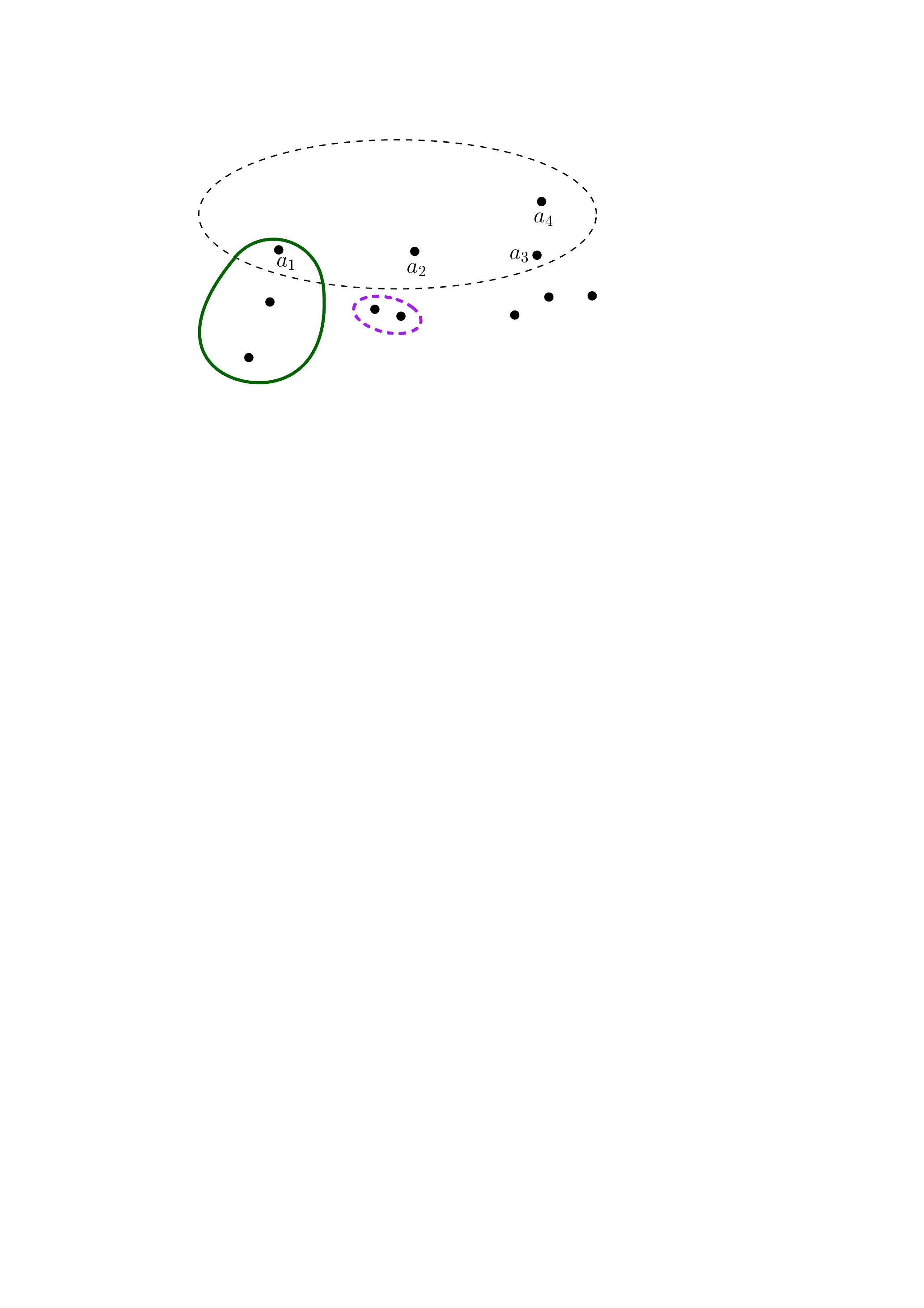}
\caption{Construction of graphs $H_1$ (right) and $H_2$ (middle) given medium cuts $S_1,S_2,S_3, S_4$ (left). 
Each colored set represents a medium cut surviving in either $H_1$ or $H_2$. The red and blue cuts survive in $H_2$, and the green cut survives in $H_1$. The purple cut survives in both $H_1$ and $H_2$.}
\label{fig:ex}
\end{figure}

We claim that $i + j < 2 ( \ell + 1)$ and $\ell < k/2$. For, if $i+j \geq 2(\ell+1)$, then consider choosing $S_{\ell+1}$ to be an arbitrary medium cut;  we would have $|\Venn (S_1, \dots, S_{\ell+1})| \geq i + j \geq 2 (\ell + 1)$, contradicting maximality of $\ell$. Likewise, if $\ell \geq k/2$, then we would have $|\Venn(S_1, \dots, S_{r-1})| \geq 2(r-1)$ and $|\Venn(S_1, \dots, S_r)| \geq 2 r$ where $r = \lceil k/2 \rceil$; this would contradict \Cref{ell-bound}.

From these two bounds, we conclude that $i+j \leq 2 \ell + 1 \leq k$. Since  $k < |S| < n - k$, both $n_1$ and $n_2$ are strictly larger than  $k$ and strictly smaller than $n$. Hence, from the induction hypothesis, the number of medium cuts in $H_1$ and $H_2$ is at most $c_k (n_1 - k)$ and $c_k (n_2 - k)$ respectively.

We now claim that every medium cut of the original graph $G$ survives in either $H_1$ or $H_2$ (or both). For, suppose there is some medium cut $T$ where an edge $e' \in \partial T$ lies inside an atom $A_{i'}$ and an edge $e'' \in \partial T$ lies inside an atom $B_{j'}$. Then the atoms $A_{i'}$ and $B_{j'}$ would both split into two new atoms  in $\Venn (S_1, \dots, S_{\ell}, T)$, giving $|\Venn (S_1, \dots, S_{\ell}, T)| \geq i+j + 2$. This contradicts maximality of $\ell$. 

Consequently, the number of medium cuts in $G$ is at most 
$$
c_k  (n_1  - k) + c_k (n_2 - k) = c_k (  (n - |S| + i) + (|S| + j) - 2 k) = c_k (n + i + j - 2 k).
$$

Now, $i + j \leq k$ so this is at most $c_k (n - k)$, completing the induction.
\end{proof}

\section{The Contraction Process}
\label{sec:contraction-process}
Our next goal will be to lower-bound the probability that a given $k$-cut $K$ is preserved during the Contraction Algorithm. More generally, for an edge set $J \subseteq E(G)$, we say that $J$ \emph{survives} the Contraction Algorithm if no edge of $J$ ever gets selected during any iteration.  Following \cite{HarrisSrinivasan}, we define the \emph{Contraction
  Process up to stage $i$} for $J$ as follows.
\begin{quote}
  Starting with the graph $G_n = G$, in stage $j$ we select an edge
  $e_j$ from the resulting (random) subgraph $G_j$ uniformly at random
  \emph{excluding the edges in $J$ itself}, and contract
  $e_j$ to get the graph $G_{j-1}$. We stop when we reach $G_i$. 
\end{quote}

 It is possible, and allowed, for some edges of $J$ to become self-loops and be removed from the graph. When considering a subgraph $G_j$ during the Contraction Process for $J$,  bear in mind that we may have $J \not \subseteq E(G_j)$.
 
 For the Contraction Process for $J$,  we define the key statistic
\begin{gather}
  R_i = \sum_{j=i+1}^n \frac{\lk}{|E(G_j)|}. \label{eq:Ri}
\end{gather}
Here,  $R_i$ serves as a linearized approximation to the
probability of avoiding $J$ in the Contraction
Algorithm. Specifically, we show the following result which is a slight reformulation of \cite{HarrisSrinivasan}:
\begin{proposition}
  \label{hf1}
Let $J$ be an edge set and let $\alpha = |J|/\lambda_k$. Suppose we run the Contraction Algorithm up to stage $i \geq \max \{4 \alpha k, k \}$. The probability that $J$ survives is at least $e^{-\alpha k \bE[R_i] - \alpha k}$,   where the expectation is taken over the Contraction Process for $J$ up to stage $i$. 
\end{proposition}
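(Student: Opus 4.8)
The plan is to compare the Contraction Algorithm directly against the Contraction Process for $J$, exploiting the fact that the two processes are coupled in a natural way. Fix a stage $j$ with $i < j \le n$. Conditioned on having reached the graph $G_j$ (under the Contraction Process for $J$), the Contraction Algorithm, starting from the same $G_j$, selects an edge uniformly among all $|E(G_j)|$ edges, while the Contraction Process selects uniformly among the $|E(G_j)| - |J \cap E(G_j)|$ edges not in $J$. Since $|J \cap E(G_j)| \le |J| = \alpha \lambda_k = \alpha k \lk$, the probability that the Contraction Algorithm avoids $J$ at stage $j$, conditioned on the current graph, is $1 - \frac{|J \cap E(G_j)|}{|E(G_j)|} \ge 1 - \frac{\alpha k \lk}{|E(G_j)|}$, and conditioned on avoiding $J$ the resulting graph $G_{j-1}$ has the same law as one step of the Contraction Process. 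Multiplying these one-step bounds telescopes: the probability that $J$ survives the Contraction Algorithm down to stage $i$ is at least
\[
\E\!\left[\ \prod_{j=i+1}^{n}\left(1 - \frac{\alpha k \lk}{|E(G_j)|}\right)\right],
\]
where the expectation and the graphs $G_j$ are those of the Contraction Process for $J$ up to stage $i$.

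Next I would convert this product into the exponential form claimed. Using the inequality $1 - x \ge e^{-x - x^2}$ valid for $x \in [0, 1/2]$ (or a similar elementary bound), each factor is at least $\exp\!\bigl(-\frac{\alpha k \lk}{|E(G_j)|} - (\frac{\alpha k \lk}{|E(G_j)|})^2\bigr)$, provided $\frac{\alpha k \lk}{|E(G_j)|} \le 1/2$. Summing the first-order terms over $j$ gives exactly $\alpha k \sum_{j=i+1}^n \frac{\lk}{|E(G_j)|} = \alpha k R_i$ by the definition \eqref{eq:Ri} of $R_i$. So the product is at least $\exp(-\alpha k R_i - (\text{second-order sum}))$, and it remains to show the second-order sum is at most $\alpha k$, i.e. $\sum_j \bigl(\frac{\alpha k \lk}{|E(G_j)|}\bigr)^2 \le \alpha k$. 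For this I would use the crude lower bound on edge counts coming from \Cref{m-bound0}: in the Contraction Process the graph $G_j$ on $j$ vertices still has no small cut destroyed in a way that matters, but more simply, every vertex of $G_j$ corresponds to a vertex subset of $G$ whose boundary is a cut of $G$, hence has degree at least... — actually the clean route is that $G_j$ has at least $j$ vertices each of degree $\ge$ (something like) $\frac{k}{k-1}\lk$ after excluding the at most $k-1$ low-degree ones, giving $|E(G_j)| \ge \frac{(j-k+1)\lk}{2} \cdot \frac{k}{k-1} \ge \frac{j \lk}{4}$ once $j \ge k$ (this also handles the hypothesis $i \ge k$, and when $i \ge 4\alpha k$ we get $\frac{\alpha k \lk}{|E(G_j)|} \le \frac{4\alpha k}{j} \le 1$, and $\le 1/2$ for $j \ge 8\alpha k$; the boundedly many remaining stages $j < 8\alpha k$ contribute $O(\alpha k)$ to the second-order sum directly). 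Then $\sum_{j > i} \bigl(\frac{4\alpha k}{j}\bigr)^2 = 16\alpha^2 k^2 \sum_{j>i} j^{-2} \le \frac{16 \alpha^2 k^2}{i} \le 4 \alpha k$ using $i \ge 4\alpha k$. Combining, the survival probability is at least $e^{-\alpha k R_i}$ times $e^{-O(\alpha k)}$, and with the constants chased carefully (or by tuning the threshold in the hypothesis) this is $\ge e^{-\alpha k \E[R_i] - \alpha k}$ after one more step.

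That last step — passing from $\E[\exp(-\alpha k R_i - \cdots)]$ to $\exp(-\alpha k \E[R_i] - \cdots)$ — is where I would invoke Jensen's inequality: since $x \mapsto e^{-\alpha k x}$ is convex, $\E[e^{-\alpha k R_i}] \ge e^{-\alpha k \E[R_i]}$. This is exactly why the statement is phrased with $\E[R_i]$ inside the exponent rather than $R_i$ itself, and it is the reason the whole argument is set up to produce a clean multiplicative/exponential lower bound before taking expectations.

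I expect the main obstacle to be the bookkeeping around the edge-count lower bound during the Contraction \emph{Process} (not the plain Contraction Algorithm): one must check that contracting edges while forbidding $J$ does not destroy the degree lower bound that makes $|E(G_j)| = \Omega(j\lk)$, and handle the $O(\alpha k)$ "early" stages where this bound is too weak to control the second-order terms — these are precisely the stages the hypothesis $i \ge \max\{4\alpha k, k\}$ is designed to keep out of the dangerous regime. Everything else is routine: the coupling, the telescoping, the elementary $1-x \ge e^{-x-x^2}$ estimate, and Jensen.
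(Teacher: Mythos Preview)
Your approach is essentially the paper's: couple the Contraction Algorithm to the Contraction Process for $J$, lower-bound the survival probability by $\bE\bigl[\prod_{j>i}(1-x_j)\bigr]$ with $x_j=\alpha k\lk/|E(G_j)|$, apply $1-x\ge e^{-x-x^2}$, bound the quadratic tail using an edge-count lower bound, and finish with Jensen. The paper formalizes the coupling by an explicit induction on $n$ showing $\Pr[\text{survive}]\ge\bE[L_G]$ for $L_G=\prod_{j>i}(1-x_j)$, but this is exactly the telescoping you describe.

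The one place where you wobble is the edge-count bound, and it is simpler than you fear. Any contraction $G_j$ of $G$ has minimum $k$-cut value at least $\lambda_k(G)$ (a $k$-cut of $G_j$ pulls back to one of $G$), so \Cref{m-bound0} applies to $G_j$ as stated and gives $|E(G_j)|\ge \frac{jk}{2(k-1)}\lk$. Hence $x_j\le \frac{2\alpha(k-1)}{j}<\tfrac12$ for every $j>i\ge 4\alpha k$. There are no ``early stages'' to handle separately, and the quadratic sum is at most $4\alpha^2(k-1)^2\sum_{j>i}j^{-2}\le 4\alpha^2(k-1)^2/i\le \alpha k$.

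Your weaker bound $|E(G_j)|\ge j\lk/4$ gives only $x_j\le 4\alpha k/j$, which can be arbitrarily close to $1$ for $j$ just above $i=4\alpha k$; in that regime $1-x\ge e^{-x-x^2}$ fails, and your fallback (``these stages contribute $O(\alpha k)$ to the second-order sum directly'') does not work, since $-\log(1-x)$ is unbounded as $x\to 1$. This is a genuine slip in the execution, but it disappears once you use the sharper bound above; no separate treatment of small $j$ is needed.
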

\begin{proof}
For $i \leq j \leq n$ let us define $$
 x_{j} = \frac{|J|}{|E(G_j)|} = \frac{\alpha k \lk}{|E(G_j)|}
 $$
 where $G_j$ is the subgraph obtained at stage $j$ of the Contraction Process for $J$ starting at $G$.  We also define the corresponding random variable
$$
L_G = \prod_{j=i+1}^{n} (1 - x_{j}).
$$

Note that, by the property of iterated expectations, we calculate the expected value of $L_G$ as:
\begin{align*}
\bE[L_G] &= \frac{1}{|E(G) \setminus J|} \sum_{e \in E(G) \setminus J} \bE[L_G \mid e_n = e] = \frac{1}{|E(G) \setminus J|} \sum_{e \in E(G) \setminus J} \bE[ (1-x_{n}) L_{G/e}] \\
&= (1 - x_n)  \sum_{e \in E(G) \setminus J}  \frac{\bE[L_{G / e}]}{|E(G) \setminus J|},
\end{align*}
where $G/e$ denotes the graph obtained by contracting edge $e$ in $G$; note that $G_{n-1} = G/e_n$ during the Contraction Process.

We first claim that if we run the Contraction Algorithm on $G$, then $J$ survives to stage $i$ with probability at least $\bE[L_G]$. We show this by induction on $n$. The case $n = i$ holds vacuously since then $L_G = 1$ with probability one and $J$ survives with probability one.  

For the induction step, let $n > i$.  The Contraction Algorithm chooses edge $e_n$ uniformly at random from $E(G)$ and then continues on $G / e_n$. Edge set $J$ survives to stage $i$ if and only if the following events occur: (i) $e_n \in E(G) \setminus J$ and (ii) conditional on fixed choice of $e_n = e$,  the edge set $J$ survives the Contraction Algorithm in $G / e$ to stage $i$.  By the induction hypothesis, the latter event has probability at least $\bE[L_{G/e}]$, and so:
\begin{align*}
&\Pr(\text{$J$ survives starting from $G$})  \geq \frac{1}{|E(G)|} \sum_{e \in E(G) \setminus J} \bE[L_{G / e}] = \frac{|E(G) \setminus J|}{|E(G)|} \sum_{e \in E(G) \setminus J} \frac{ \bE[L_{G / e}] }{ |E(G) \setminus J|}.
\end{align*}
Note now that $$
\frac{|E(G) \setminus J|}{|E(G)|} \geq \frac{|E(G)| - |J|}{|E(G)|}  = 1 - x_{n},
$$
so this is at least $$
(1 - x_{n}) \sum_{e \in E(G) \setminus J}  \frac{\bE[L_{G / e}]}{|E(G) \setminus J|} = \bE[L_G],
$$
 which concludes the induction.

So $J$ survives the Contraction Algorithm with probability at least $\bE[L_G]$.  We need to bound $L_G$. Consider some stage $j \geq i$ of the Contraction Process for $J$. By \Cref{m-bound0}, since $i \geq k$, we have $|E(G_j)| \geq \frac{j k}{2(k-1)} \lk$ so $x_{j} \leq 2 \alpha (k-1)/j.$  Since $j \geq i \geq 4 \alpha k$ this implies $x_{j} \leq 1/2$.   We use the elementary identity $1-x \geq e^{-x - x^2}$ for $x \in [0,1/2]$ to get:
 \begin{align*}
  L_G &= \prod_{j=i+1}^n (1 - x_{j}) \geq \prod_{j=i+1}^n e^{-x_{j} - x_{j}^2} \geq \prod_{j=i+1}^n e^{-\alpha k \lk / |E(G_j)|  - (2 \alpha (k-1)/j )^2} \\
  &= e^{-\alpha k R_i -4 \alpha^2 (k-1)^2 \sum_{j=i+1}^n 1/j^2} \geq e^{-\alpha k R_i -4 \alpha^2 (k-1)^2/i}.
\end{align*}

Since $i \geq 4 \alpha k$, we thus have $L_G \geq e^{-\alpha k R_i-\alpha k}$. Taking expectations and using Jensen's inequality, we have  $\bE \left[ L_G \right] \geq \bE[e^{-\alpha k R_i - \alpha k}] \geq e^{-\alpha k - \alpha k \bE[R_i]}.$
\end{proof}

Using this, we can recover Karger and Stein's original success probability of $n^{-2\alpha(k-1)}$. Although it is much weaker than the bound of $n^{-\alpha k}$ we want, this is useful for a few edge cases in the analysis.

\begin{corollary}
\label{simpl-cor1}
For any parameter $\alpha \geq 1$ and any $k$-cut $K$ with $|\partial K| \leq \alpha \lambda_k$, the Contraction Algorithm with parameter $\tau = \lceil 4 \alpha k \rceil$ selects $K$ with probability at least $n^{-2 \alpha(k-1)} k^{-O(\alpha k)}$.
\end{corollary}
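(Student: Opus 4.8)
The plan is to combine \Cref{hf1}, applied to the cut-edge set $J = \partial K$, with the crude observation that the final graph $G_\tau$ has few $k$-cuts. Set $J = \partial K$ and $\alpha' := |\partial K|/\la_k \le \alpha$; since $\tau = \lc 4\alpha k \rc \ge \max\{4 \alpha' k, k\}$, \Cref{hf1} applies with stage $i = \tau$ and parameter $\alpha'$, and gives that $J$ survives the Contraction Algorithm down to $G_\tau$ with probability at least $e^{-\alpha' k\, \bE[R_\tau] - \alpha' k} \ge e^{-\alpha k\, \bE[R_\tau] - \alpha k}$, using $R_\tau \ge 0$ and $\alpha' \le \alpha$. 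Conditioned on $J$ surviving, every contracted edge lies inside a single part of $K$, so $K$ descends to a genuine $k$-cut of $G_\tau$ (with all $k$ parts nonempty); as $G_\tau$ has $\tau$ vertices it has at most $k^\tau = k^{O(\alpha k)}$ many $k$-cuts, each of which the algorithm returns with equal probability. Hence
\[
\Pr[\text{output }K] \ \ge\ e^{-\alpha k\, \bE[R_\tau] - \alpha k} \cdot k^{-O(\alpha k)}.
\]

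It remains to bound $\bE[R_\tau]$, for which a pointwise bound on $R_\tau$ is enough. For every stage $j \ge \tau \ge k$ of the Contraction Process for $J$, the graph $G_j$ has exactly $j$ vertices, and since contracting an edge cannot decrease the minimum $k$-cut weight, \Cref{m-bound0} gives $|E(G_j)| \ge \frac{jk}{2(k-1)}\lk$ --- the same estimate already used inside the proof of \Cref{hf1}. Substituting into $R_\tau = \sum_{j=\tau+1}^n \lk/|E(G_j)|$ yields
\[
R_\tau \ \le\ \frac{2(k-1)}{k}\sum_{j=\tau+1}^{n}\frac1j \ \le\ \frac{2(k-1)}{k}\ln n,
\]
deterministically, so $\bE[R_\tau] \le \frac{2(k-1)}{k}\ln n$. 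Therefore $e^{-\alpha k\,\bE[R_\tau]} \ge n^{-2\alpha(k-1)}$, and the leftover factors $e^{-\alpha k}$ and $k^{-O(\alpha k)}$ combine into $k^{-O(\alpha k)}$, giving the claimed bound $n^{-2\alpha(k-1)} k^{-O(\alpha k)}$.

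The only genuinely separate case is $n \le \tau$, where the while loop never runs and the algorithm simply returns a uniformly random $k$-cut of $G$; then $\Pr[\text{output }K] \ge k^{-n} \ge k^{-\tau} = k^{-O(\alpha k)} \ge n^{-2\alpha(k-1)} k^{-O(\alpha k)}$ since $n^{-2\alpha(k-1)} \le 1$. I do not expect a real obstacle here: the corollary is essentially a bookkeeping consequence of \Cref{hf1} and \Cref{m-bound0}. The one point requiring mild care is to verify that every quantity entering the exponent of $k$ --- namely $\tau$, the number of $k$-cuts of $G_\tau$, and the additive error $\alpha k$ from \Cref{hf1} --- is $O(\alpha k)$, so that all of them can be absorbed into the $k^{-O(\alpha k)}$ slack while the $n$-dependence stays exactly $n^{-2\alpha(k-1)}$.
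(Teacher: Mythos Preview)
Your proof is correct and follows essentially the same route as the paper: apply \Cref{hf1} to $J = \partial K$, bound $R_\tau$ pointwise via \Cref{m-bound0}, and then account for the uniform choice among at most $k^\tau$ many $k$-cuts in $G_\tau$. The only cosmetic differences are that you explicitly introduce $\alpha' = |\partial K|/\lambda_k \le \alpha$ (the paper tacitly uses this monotonicity) and you treat the case $n \le \tau$ separately, whereas the paper folds it into the same bound.
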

\begin{proof}
Consider the Contraction Process for edge set $J = \partial K$. In each stage $j$, \Cref{m-bound0} shows that graph $G_j$ has at least $\frac{j k}{2(k-1)} \lk$ edges. Hence, with probability one, there holds
$$
R_{\tau} \leq \sum_{j=\tau+1}^n \frac{2(k-1)}{k j}  \leq \frac{2(k-1)}{k} \log (n/\tau).
$$

By \Cref{hf1}, the probability $K$ survives to stage $\tau$ is at least $e^{-\alpha k \bE[R_\tau] - \alpha k}$, which is at least $n^{-2 \alpha (k-1)}$ with our bound on $R_{\tau}$.  Next, suppose that $K$ does survive to stage $\tau$ (this includes the case where $n \leq \tau$). The resulting graph has at most $\tau$ vertices and hence at most $k^\tau$ different $k$-cuts. Thus, $K$ is selected from this graph with probability at least $k^{-\tau} \geq k^{-O(\alpha k)}$. Overall $K$ is selected with probability at least $n^{-2 \alpha (k-1)} k^{-O(\alpha k)}$.
\end{proof}

\section{Analyzing the Dynamics of the Contraction Process}
\label{sec:small-cuts}
Our goal now is to analyze the Contraction Process for a given edge set $J$. Let $\alpha = |J|/\lambda_k$.  We fix some parameter $\eps \in [0,1/k)$, and define a \emph{good cut} to be a medium cut $C$ with  
$$
|\partial C \setminus J| \geq (1-\eps) \frac{k}{k-1} \lk;
$$
 the role of $\eps$ will be explained later. We also define two related parameters 
 $$
 \delta := \frac{1 - \eps k}{k-1},  \qquad \text{and} \qquad \beta := k + 2 \alpha k / \eps.
$$

Note that $\delta > 0$ and $1 + \delta = (1 - \eps) k / (k-1)$. We begin with  a  lower bound on edge count in a single iteration of the  Contraction Algorithm.

\begin{proposition}
  \label{m-bound}
  Let $s$ be the number of good cuts in $G$. If $n \geq \beta$, then
  $$
m \geq   s \cdot \tfrac{k}{k-1} \lk/2 + (n - s - \beta) \cdot \lk.
  $$
\end{proposition}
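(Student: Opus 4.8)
The plan is a degree-counting argument over singleton cuts, refining the proof of \Cref{m-bound0}. For each vertex $v\in V$ I consider the singleton cut $\{v\}$, of weight $\deg(v)$, and I sort the vertices into three classes: $v$ is \emph{good} if $\{v\}$ is a good cut, \emph{large} if $\deg(v)\ge 2\lk$, and \emph{bad} otherwise (i.e.\ $\{v\}$ is a small cut, or a medium cut that fails the good-cut condition). These classes partition $V$, since a good cut is medium (so $\deg(v)<2\lk$, ruling out ``large''); write $g,L,b$ for their sizes, so $g+L+b=n$. A good singleton has $\deg(v)\ge\frac{k}{k-1}\lk$ (good cuts are medium), a large vertex has $\deg(v)\ge 2\lk$, and a bad vertex has $\deg(v)\ge 0$, so summing degrees gives $2m=\sum_{v}\deg(v)\ge g\cdot\frac{k}{k-1}\lk + L\cdot 2\lk$.

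The crux is to show $b<\beta$, which I would do by bounding the two kinds of bad vertices separately. First, at most $k-2$ vertices have $\{v\}$ a \emph{small} cut: if $k-1$ of them existed, the $k$-cut they generate (the remaining vertices form a nonempty part since $n\ge\beta\ge k$) would have weight $<(k-1)\cdot\frac{k}{k-1}\lk=\lambda_k$, contradicting minimality of $\lambda_k$ — this is the singleton case of \Cref{lemma:small}, and I need it because it gives the sharp constant $k-2$ rather than $2^{k-2}$. Second, if $\{v\}$ is a \emph{medium but not good} cut then $|\partial\{v\}\cap J|=|\partial\{v\}|-|\partial\{v\}\setminus J|>\frac{k}{k-1}\lk-(1-\eps)\frac{k}{k-1}\lk=\eps\frac{k}{k-1}\lk$; since each edge of $J$ is incident to exactly two vertices, $\sum_{v}|\partial\{v\}\cap J|=2|J|\le 2\alpha k\lk$, so the number of such $v$ is $<\frac{2\alpha k\lk}{\eps\cdot\frac{k}{k-1}\lk}=\frac{2\alpha(k-1)}{\eps}<\frac{2\alpha k}{\eps}$. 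Adding the two counts, $b<(k-2)+\frac{2\alpha k}{\eps}<\beta$.

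To finish, I would rewrite the target $m\ge s\cdot\frac{k}{k-1}\lk/2+(n-s-\beta)\lk$ as $g\cdot\frac{k}{k-1}\lk+L\cdot 2\lk\ge s\cdot\frac{k}{k-1}\lk+(n-s-\beta)\cdot 2\lk$, which by the displayed lower bound on $2m$ suffices. Substituting $n=g+L+b$, the difference of the two sides collapses to $(s-g)\cdot\frac{k-2}{k-1}\lk+(\beta-b)\cdot 2\lk$; the first term is $\ge0$ because $k\ge2$ and $g\le s$ (each good singleton cut is among the $s$ good cuts of $G$), and the second is $>0$ because $b<\beta$. Dividing by $2$ gives the claim. (Taking $s=n$, $\beta=0$ here recovers \Cref{m-bound0}, as a sanity check.)

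The only step requiring genuine care is the bound $b<\beta$: I must use the singleton-specialized small-cut count instead of \Cref{lemma:small} as a black box, so as to get the additive $k$ in $\beta$ rather than $2^{k-2}$, and I should note that when $G$ is a contracted graph in which some edges of $J$ have become self-loops, the surviving edge set still has at most $|J|=\alpha\lambda_k$ edges, so the estimate $\sum_v|\partial\{v\}\cap J|\le 2\alpha k\lk$ remains valid. Everything else is routine bookkeeping.
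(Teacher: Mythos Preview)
Your proof is correct and follows essentially the same approach as the paper's: both bound the number of small singleton cuts by $k-2$ via the $k$-cut argument, bound the medium-but-not-good singletons by $2\alpha k/\eps$ via double-counting edges of $J$, and then sum degree lower bounds. The only differences are cosmetic: you partition into classes $g,L,b$ and verify the target inequality via the identity $(s-g)\tfrac{k-2}{k-1}\lk+(\beta-b)\cdot 2\lk\ge 0$, whereas the paper overcounts the medium singletons as $s+2\alpha k/\eps$ directly and simplifies; you also use the slightly sharper estimate $|\partial\{v\}\cap J|>\eps\tfrac{k}{k-1}\lk$ where the paper uses $\ge\eps\lk$, but both lead to the same bound.
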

\begin{proof}
  Each vertex $v$ of $G$ corresponds to a cut $C_v$. At most $k-2$ of these vertex cuts may be small cuts. For, if there are $k-1$ such vertices $v_1, v_2, \dots, v_{k-1}$, then the $k$-cut generated by the $ \{v_1 \}, \dots, \{v_{k-1} \}$  would have weight below $(k-1) \cdot \frac{k}{k-1} \lk  = k \lk = \lambda_k$, a contradiction. 
  
Let $U$ denote the set of vertices $v$ for which cut $C_v$ is medium but not good. For $v \in U$, we have $|\partial C_v| \geq \frac{k}{k-1} \lk$ and hence $|\partial C_v \cap J| \geq \eps \lk$. Each edge appears in at most two vertex cuts, so
  $$
| J | \geq \tfrac{1}{2} \sum_{v \in U}  | \partial C_v \cap J | \geq |U| \eps \lk / 2;
  $$
  since $|J| = \alpha k \lk$, this implies $|U| \leq 2 \alpha k / \eps$. 

Summarizing, at most $k-2$ vertices correspond to small
  cuts, and at most $s + 2 \alpha k / \eps$ vertices correspond to medium cuts. The remaining vertices (at least $n - s - 2 \alpha k/\eps - k + 2$ of them) correspond to large cuts so their degree is at least $2 \lk$. We thus have
  \[
  2 m \geq (s + 2 \alpha k /\eps) \cdot \tfrac{k}{k-1} \lk + (n - s - 2 \alpha k /\eps - k + 2) \cdot 2 \lk \geq s \cdot  \tfrac{k}{k-1} \lk + (n - s - \beta) \cdot 2 \lk. \qedhere
  \]
\end{proof}

For our purposes, we can combine \Cref{m-bound0} and \Cref{m-bound} to get the following (somewhat crude) estimate:
\begin{corollary}
\label{m-bound2}
If $G$ has $s$ good cuts, then $m \geq  (n - \beta) \lk - \min\{s, (n-\beta) \} \lk/2$.
  \end{corollary}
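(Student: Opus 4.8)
The plan is to obtain \Cref{m-bound2} purely by combining the two edge-count lower bounds already in hand, \Cref{m-bound0} and \Cref{m-bound}, splitting on how $s$ compares with $n-\beta$. First I would dispatch the degenerate regime $n<\beta$: here $n-\beta<0$ and $s\ge 0$, so $\min\{s,n-\beta\}=n-\beta$ and the claimed right-hand side is $(n-\beta)\lk-(n-\beta)\lk/2=(n-\beta)\lk/2\le 0\le m$, so there is nothing to prove. From now on assume $n\ge\beta$; since $\beta=k+2\alpha k/\eps\ge k$, this also gives $n\ge k$, so \Cref{m-bound0} is available as well.

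Next I would split on the sign of $n-s-\beta$. If $s\le n-\beta$, then $\min\{s,n-\beta\}=s$, and \Cref{m-bound} gives $m\ge \frac{sk}{2(k-1)}\lk+(n-s-\beta)\lk$; regrouping the right-hand side yields $(n-\beta)\lk+\bigl(\tfrac{k}{2(k-1)}-1\bigr)s\lk=(n-\beta)\lk-\tfrac{k-2}{2(k-1)}\,s\lk$. Since $\tfrac{k-2}{2(k-1)}\le\tfrac12$ for $k\ge 3$ and $s\lk\ge 0$, this is at least $(n-\beta)\lk-s\lk/2$, which is exactly the target. If instead $s>n-\beta$, then $\min\{s,n-\beta\}=n-\beta$ and the target reduces to $m\ge(n-\beta)\lk/2$; here I would simply invoke \Cref{m-bound0}, which gives $m\ge\frac{nk}{2(k-1)}\lk\ge\frac n2\lk\ge\frac{n-\beta}{2}\lk$ using $\frac{k}{k-1}\ge1$ and $\beta\ge 0$.

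I do not expect a genuine obstacle here: this is a short two-case computation once the two propositions are pulled off the shelf. The only points requiring a little care are checking the inequality $\tfrac{k-2}{2(k-1)}\le\tfrac12$ (so that the weaker coefficient in \Cref{m-bound} is still strong enough), and noting that the hypothesis $n\ge k$ needed to apply \Cref{m-bound0} in the second subcase is implied by $n\ge\beta$. It is worth remarking that \Cref{m-bound} alone does not suffice in the regime $s>n-\beta$, where its term $(n-s-\beta)\lk$ turns negative and the resulting bound is too weak; this is precisely why the crude estimate of \Cref{m-bound2} is stated as a combination of both bounds.
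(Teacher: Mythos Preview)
Your proposal is correct and matches exactly what the paper intends: the corollary is stated without proof as a ``(somewhat crude)'' combination of \Cref{m-bound0} and \Cref{m-bound}, and your case split on $s\le n-\beta$ versus $s>n-\beta$ (with the degenerate $n<\beta$ handled separately) carries this out in full.
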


We are now ready derive the key bound on the random variable $R_i$ as defined in Eq.~(\ref{eq:Ri}).  For $p \geq j$ and $s \geq 0$, define the function
\begin{gather}
  f(j,s,p) = \log(p/j) + \frac{\log \Bigl( 1 + (s/p) (1 + 1/\delta) (1
    - (j/p)^{\delta}) \Bigr)}{1 + \delta}. \label{eq:fun-f}
\end{gather}

We will prove a bound on $\bE[R_i]$ in terms of the  function $f$ by induction. The derivation of the  function $f$ is itself rather opaque; we describe the (non-rigorous) analysis which leads to it in \Cref{sec:heuristic-bound}.  We first observe a few analytical properties of function $f$. 
\begin{proposition}
  \label{gb1}
  For $p \geq j$ and $s \geq 0$, we have the following:
  \begin{enumerate}
  \item Function $f(j,s,p)$ is a well-defined, nonnegative, nondecreasing, concave-down function of  $s$.
    \item  The function $y \mapsto y + f(j, s e^{-(1+\delta) y}, p)$ is an increasing function of $y$.
  \end{enumerate}
\end{proposition}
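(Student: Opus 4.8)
The plan is to verify the two analytic claims about
\[
f(j,s,p) = \log(p/j) + \frac{\log\bigl( 1 + (s/p)(1+1/\delta)(1 - (j/p)^{\delta}) \bigr)}{1+\delta}
\]
by direct calculus in the variable $s$ (for part 1) and in the variable $y$ (for part 2), treating $j,p$ with $p \ge j$ and $\delta > 0$ as fixed parameters.

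\textbf{Part 1.} Write $a := \tfrac{1}{p}(1+1/\delta)(1 - (j/p)^{\delta})$, so that $f(j,s,p) = \log(p/j) + \tfrac{1}{1+\delta}\log(1+as)$. First I would check $a \ge 0$: since $p \ge j > 0$ we have $0 \le (j/p)^{\delta} \le 1$, hence $1 - (j/p)^{\delta} \ge 0$, and $1 + 1/\delta > 0$. Thus $1 + as \ge 1 > 0$ for all $s \ge 0$, so the logarithm is well-defined (establishing well-definedness), and $f \ge \log(p/j) \ge 0$ (establishing nonnegativity). For monotonicity and concavity, differentiate: $\frac{\partial f}{\partial s} = \frac{a}{(1+\delta)(1+as)} \ge 0$, giving nondecreasing, and $\frac{\partial^2 f}{\partial s^2} = -\frac{a^2}{(1+\delta)(1+as)^2} \le 0$, giving concave-down. (If $a = 0$, all of these hold trivially since $f$ is constant in $s$.)

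\textbf{Part 2.} Set $g(y) := y + f(j, s e^{-(1+\delta)y}, p)$. Using the shorthand above with the substitution $s \mapsto s e^{-(1+\delta)y}$, we get $g(y) = y + \log(p/j) + \tfrac{1}{1+\delta}\log\bigl(1 + a s e^{-(1+\delta)y}\bigr)$. Differentiating in $y$,
\[
g'(y) = 1 + \frac{1}{1+\delta}\cdot\frac{-(1+\delta) a s e^{-(1+\delta)y}}{1 + a s e^{-(1+\delta)y}} = 1 - \frac{a s e^{-(1+\delta)y}}{1 + a s e^{-(1+\delta)y}} = \frac{1}{1 + a s e^{-(1+\delta)y}}.
\]
Since $a \ge 0$, $s \ge 0$, and $e^{-(1+\delta)y} > 0$, the denominator is at least $1$, so $g'(y) \in (0,1]$; in particular $g'(y) > 0$, so $g$ is increasing. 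This is exactly the claimed statement.

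I do not anticipate a genuine obstacle here — both parts are routine differentiation once the nonnegativity of the coefficient $a$ (equivalently, of $1 - (j/p)^{\delta}$, which needs only $p \ge j$ and $\delta > 0$) is observed; that observation is the one small point worth stating explicitly, and it is also what makes the logarithm well-defined. The mild care needed is only to handle the degenerate case $j = p$ (where $a = 0$ and $f$ is constant in $s$) and to remember that $\delta > 0$ was established right before the proposition, so $1 + 1/\delta$ and $1+\delta$ are positive and the formulas make sense.
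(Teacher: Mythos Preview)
Your proof is correct and follows essentially the same approach as the paper: both observe that the coefficient $a = \tfrac{1}{p}(1+1/\delta)(1-(j/p)^\delta)$ is nonnegative, making part~1 immediate, and both compute the derivative in $y$ for part~2 (your expression $\frac{1}{1+ase^{-(1+\delta)y}}$ is algebraically equivalent to the paper's $\frac{\delta e^{(1+\delta)y}}{\delta e^{(1+\delta)y} + (s/p)(1+\delta)(1-(j/p)^\delta)}$). Your write-up is slightly more explicit about the derivative computations and the degenerate case $a=0$, but there is no substantive difference.
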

\begin{proof}
  \begin{enumerate}
  \item
    The argument of the logarithm in function $f$ is an affine function of $s$, with constant term $1$ and coefficient $ \frac{1}{p} (1 + 1/\delta) (1 - (j/p)^{\delta}) \geq 0$.        
    
  \item  The derivative as a function of $y$ is
    $$
    \frac{\delta  e^{(1+\delta) y}}{\delta e^{(1+\delta) y} + (s/p) (1+\delta)  (1 - (j/p)^{\delta})}
    $$
    which is positive. \qedhere
  \end{enumerate}
  \end{proof}

\begin{lemma}
  \label{main-prop1}
  Suppose that $G$ has $s$ good cuts and $n$ vertices.  Then, for the Contraction Process for $J$ up to some stage $i$ with $\beta \leq i \leq n$, we have $\bE[ R_i ] \leq f(i - \beta,s,n - \beta)$.
\end{lemma}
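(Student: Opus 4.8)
The plan is to prove the bound $\bE[R_i]\le f(i-\beta,s,n-\beta)$ by induction on $n$, peeling off one iteration of the Contraction Process at a time. The base case is $n=i$, where $R_i=0$ deterministically and $f(i-\beta,s,n-\beta)=f(i-\beta,s,i-\beta)=\log 1 + 0 = 0$, so the inequality holds with equality. For the inductive step with $n>i$, recall from Eq.~\eqref{eq:Ri} that $R_i = \frac{\lk}{|E(G)|} + R_i'$, where $R_i'$ is the analogous statistic for the Contraction Process run from $G_{n-1}=G/e_n$. Conditioning on the first contracted edge $e_n$ (chosen uniformly from $E(G)\setminus J$) and applying the induction hypothesis to $G/e_n$ — which has $n-1$ vertices and some number $s'$ of good cuts — gives
$$
\bE[R_i] \le \frac{\lk}{m} + \bE\big[\, f(i-\beta,\, s',\, n-1-\beta)\,\big],
$$
where the outer expectation is over the choice of $e_n$. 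So the task reduces to a one-step estimate: bounding $\frac{\lk}{m}$ using the edge-count lower bounds, controlling how the good-cut count $s'$ can grow relative to $s$, and then checking that the resulting quantity is at most $f(i-\beta,s,n-\beta)$.

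For the one-step estimate I would first invoke Corollary \ref{m-bound2}, which gives $m \ge (n-\beta)\lk - \min\{s,n-\beta\}\lk/2$, so that (writing $N := n-\beta$, $j := i-\beta$, $p := N$) we have $\frac{\lk}{m} \le \frac{1}{N - \min\{s,N\}/2}$. Next I need a bound of the form $s' \le s + (\text{something small})$: a contracted edge can only \emph{destroy} good cuts or keep them, so every good cut of $G/e_n$ either was a good cut of $G$ or arose from a previously medium-but-not-good or large cut becoming good — but in fact the cleaner route is to observe that the set of good cuts of $G/e_n$ injects into the good cuts of $G$ (contracting an edge inside a part of a cut leaves the cut and its surviving-edge count essentially unchanged, while contracting a crossing edge destroys it), so $s' \le s$; more carefully one must account for the fact that $|\partial C\setminus J|$ can only increase under contraction when edges of $J$ become self-loops, so good cuts are preserved and $s'\le s$. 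Then, using that $f$ is nondecreasing in its second argument (Proposition \ref{gb1}(1)), $\bE[f(j,s',N-1)] \le f(j,s,N-1)$, and it remains to verify the deterministic inequality
$$
\frac{1}{N - \min\{s,N\}/2} + f(j,s,N-1) \le f(j,s,N).
$$
This is the analytic heart of the proof: it is exactly the discrete analogue of the differential inequality $\partial_p f(j,s,p) \le \frac{1}{p - s/2}$ (in the regime $s\le p$) that motivated the definition of $f$ in the first place, so I expect it to hold with room to spare, to be checked by differentiating \eqref{eq:fun-f} in $p$ and comparing, using concavity of $f$ in $p$ to pass from the derivative bound to the finite-difference bound.

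The main obstacle is the bookkeeping around the good-cut count $s'$ in the contracted graph and the interaction with the two regimes of Corollary \ref{m-bound2} (whether $s \le n-\beta$ or not), together with the edge cases near the boundary $i=\beta$ and the possibility that $G/e_n$ has fewer than $\beta$ vertices — though the latter cannot happen since $n-1 \ge i \ge \beta$. I would handle the regime split by noting that once $s > N$ the bound $\frac{\lk}{m}\le \frac{2}{N}$ is what one gets, and $f$ is designed so that the "$1+(s/p)(\cdots)$" term inside the logarithm absorbs exactly this; the convexity/monotonicity properties in Proposition \ref{gb1} are precisely the levers that make the induction close without needing to track $s'$ exactly — any over- or under-count in the good cuts only helps, because $f$ is monotone in $s$ and because Proposition \ref{gb1}(2) says the "potential" $y + f(j, se^{-(1+\delta)y},p)$ moves in the favorable direction. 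So the real work is the single deterministic inequality relating $f(j,s,N)$, $f(j,s,N-1)$, and $\frac{1}{N-\min\{s,N\}/2}$, which I would verify by a direct (if slightly tedious) computation with the explicit formula \eqref{eq:fun-f}.
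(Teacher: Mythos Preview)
Your induction setup and base case match the paper's, but the one-step estimate has a genuine gap: the inequality
\[
\frac{1}{N-\min\{s,N\}/2} + f(j,s,N-1) \le f(j,s,N)
\]
is \emph{false} in general, so the induction as you outline it does not close. To see this, take $s$ large (say $s\ge N$) so that the first term is $z=2/N$, and let $\delta$ be small (it is $\approx 1/(2k)$). For large $s$ the second argument dominates the logarithm in $f$, and one computes $f(j,s,N)-f(j,s,N-1)\approx \frac{\delta}{(1+\delta)N}\cdot\frac{1}{1-(j/N)^\delta}$; for this to exceed $2/N$ you would need $1-(j/N)^\delta \le \frac{\delta}{2(1+\delta)}$, which forces $j/N$ to be bounded below by roughly $e^{-1/2}$. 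But the lemma must cover all $j\le N$, in particular $j/N\to 0$, and there the inequality fails badly.

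The missing idea is that you must use not just $s'\le s$ but the \emph{quantitative} destruction rate of good cuts. Each good cut $C$ has $|\partial C\setminus J|\ge(1+\delta)\lk$ and so is contracted in this step with probability at least $(1+\delta)\lk/m$; hence $\bE[S']\le s\bigl(1-(1+\delta)\lk/m\bigr)\le s\,e^{-(1+\delta)\lk/m}$. The paper then applies Jensen (concavity of $f$ in $s$, \Cref{gb1}(1)) to get $\bE[R_i]\le (\lk/m)+f(j,\,s e^{-(1+\delta)\lk/m},\,p-1)$, replaces $\lk/m$ by its upper bound $z$ using the monotonicity in \Cref{gb1}(2) (this is where that proposition is actually used, not merely as an aside), and finally verifies
\[
z + f\bigl(j,\,s e^{-(1+\delta)z},\,p-1\bigr) \le f(j,s,p).
\]
In the large-$s$ regime the exponential factor contributes exactly an extra $-z$ on the left, reducing the needed inequality to the much milder $f(j,s,p-1)\le f(j,s,p)$ type statement; without that factor the induction cannot close. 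In short, tracking only ``good cuts do not increase'' throws away precisely the $(1+\delta)$-versus-$1$ gap that makes the whole argument work.
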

\begin{proof}
We show this by induction on $n$. We will write $p = n - \beta, j = i - \beta$ and $m = |E(G)|$. The case $n = i$ is clear, since $R_i = 0 = f(i - \beta,s,i - \beta)$.  

For the induction step with $n > i$, the Contraction Process first selects an edge of $E(G) \setminus J$, arriving at a new graph $G'$ with $n-1$ vertices. So $$
\bE[R_i] = \frac{\lk}{m} + \bE[R_i^{G'}],
$$
where  $R_i^{G'}$ denotes the random variables defined in Eq.~(\ref{eq:Ri}) for graph $G'$. 

Let random variable $S'$ denote the number of good cuts in $G'$. By the induction hypothesis applied to $G'$, we have
\begin{equation}
\label{rig-eqn}
\bE[R_i] \leq (\lk/m) + \bE[ f(j,S',p-1)].
\end{equation}
Each good cut $C$ is selected with probability at least $\frac{|\partial C \setminus J|}{|E(G) \setminus J|} \geq \frac{k}{k-1} (1-\eps) \lk/m = (1+\delta) \lk/m$, so
$$
\bE[S'] \leq s (1 - (1 + \delta) \lk / m) \leq s e^{-(1+\delta) \lk / m }.
$$
By \Cref{gb1}, Jensen's inequality applies for the random variable $S'$ in Eq.~(\ref{rig-eqn}), giving:
$$
\bE[R_i] \leq (\lk/m) + f(j, \bE[S'], p-1) \leq (\lk/m) + f(j,s e^{-(1+\delta) \lk / m} ,p-1).
$$
Next, by \Cref{m-bound2},  we have $m \geq p \lk - \min\{s, p \} \lk/2$. So $\lk/m \leq z$ where we define 
$$
z =  \frac{2}{2 p - \min\{s, p\} }
$$
Since $y + f(j, s e^{-(1+\delta) y}, p-1)$ is an increasing function of $y$, we therefore have
$$
\bE[R_i] \leq z + f(j, s e^{-(1+\delta) z} ,p-1).
$$
To finish the proof and complete the induction, it suffices to show $z + f(j, s e^{-(1+\delta) z}, p-1) \leq f(j,s,p)$,  or equivalently,
\begin{equation}
\label{g0}
e^{(1+\delta) (z + f(j, s e^{-(1+\delta) z}, p-1))} -e^{(1+\delta) f(j,s,p)} \leq 0.
\end{equation}
After substituting in the formula for $f$, this expands to:
\begin{equation}
\label{g0g0}
( \tfrac{p-1}{j} )^{1 + \delta} \Bigl( e^{(1+\delta) z} + ( \tfrac{s}{p-1} ) (1 + 1/\delta)  (1 - (\tfrac{j}{p-1})^{\delta}) \Bigr) -  ( \tfrac{p}{j} )^{1+\delta} \Bigl( 1 +  ( \tfrac{s}{p} ) (1 + 1/\delta) (1 - (\tfrac{j}{p})^{\delta}) \Bigr) \leq 0.
\end{equation}
To simplify further, let us define a number of terms:
$$
r = s/p, \qquad q = j/p, \qquad \theta = 1 - 1/p, \qquad t = 2 - \min \{ r, 1 \}.
$$
We thus have $\tfrac{p-1}{j} = \theta/q,  \tfrac{s}{p-1} = r/\theta$, and $z = 2(1-\theta)/t$. The inequality in Eq.~(\ref{g0g0}) becomes:
$$
( \theta/q )^{1 + \delta} \Bigl( e^{2 (1-\theta) (1+\delta)/t} + (r/\theta ) (1 + 1/\delta)  (1 - (q/\theta)^{\delta}) \Bigr) -  ( 1/q )^{1+\delta} \Bigl( 1 +  r (1 + 1/\delta) (1 - q^{\delta}) \Bigr) \leq 0.
$$
Clearing out common factor $q^{1+\delta}$ and multiplying the left term through by $\theta^{1+\delta}$, it is equivalent to:
$$
 \bigl( \theta^{1+\delta} e^{2 (1-\theta)(1+\delta)/t} + r (1 + 1/\delta)  (\theta^{\delta} - q^{\delta}) \bigr) -  \bigl( 1 +  r (1 + 1/\delta) (1 - q^{\delta}) \bigr) \leq 0
$$

Collecting terms, multiplying through by $\delta$, and changing signs for convenience, Eq.~(\ref{g0}) is thus equivalent to showing:
\begin{equation}
\label{g1}
\delta + r (1 +\delta) (1 - \theta^{\delta}) - \delta \theta^{1+\delta} e^{2 (1-\theta) (1+\delta)/t} \geq 0.
\end{equation}

Note that parameter $q$ no longer plays a role in Eq.~(\ref{g1}).  Since $r \geq 2 -t$, it suffices to show that
\begin{equation}
  \label{g3}
\delta + (2-t) (1+\delta)  (1 - \theta^{\delta}) - \delta  \theta^{1+\delta} e^{2 (1-\theta) (1+\delta)/t} \geq 0.
\end{equation}

To show Eq.~(\ref{g3}), let us define a function 
$$
F(\theta, t) =  \delta +(2-t)  (1+\delta) (1 - \theta^{\delta}) - \delta  \theta^{1+\delta} e^{2 (1-\theta) (1+\delta)/t}
$$
for independent variables $\theta, t$. We need to show that $F(\theta, t) \geq 0$ for all $\theta  \in [0,1]$ and $t \in [1,2]$.

The second partial derivative of $F$ with respect to $t$ is given by
$$
\frac{\partial^2 F(\theta, t)}{\partial t^2} =\frac{-4 \delta  (1 + \delta) (1 - \theta) \theta ^{1 + \delta} e^{2 (1+\delta) (1-\theta)/t} ( (1+\delta ) (1 - \theta) + t)}{t^4}
$$
which is clearly negative for $\delta, \theta, t$ in the given range.  Thus,  the minimum value of $F(\theta, t)$ in the region occurs at either $t = 1$ or $t = 2$. So, in order to show that $F(\theta, t) \geq 0$, it suffices to show that $F(\theta, 1) \geq 0$ and $F(\theta, 2) \geq 0$.

At $t = 2$ we have $F(\theta, 2) = \delta(1 - e^{(1+\delta)(1-\theta)} \theta^{1+\delta})$. To show that $F(\theta, 2) \geq 0$, we thus need to show that $e^{(1+\delta)(1-\theta)} \theta^{(1+\delta)} \leq 1$, or equivalently $e^{1-\theta} \theta\leq 1$; this can be verified by routine calculus.

At $t = 1$, we have  $F(\theta,1) =  \delta + (1 + \delta)(1 - \theta^{\delta}) - \delta  \theta^{1+\delta} e^{2 (1-\theta) (1+\delta)}.$ Note that $F(1,1) = 0$. So, in order to show that $F(\theta, 1) \geq
0$ for all $\theta \in [0,1]$,  it suffices to show that the derivative of $F(\theta, 1)$ with respect to $\theta$ is negative for $\theta \in (0,1)$. This derivative is given by:
$$
\frac{\partial F(\theta, 1)}{\partial \theta} = -\delta  (1+\delta) e^{2 (1+\delta) (1-\theta)} \theta ^{-(1-\delta)} \bigl (e^{-2
   (1+\delta) (1-\theta)}-2 \theta ^2+\theta \bigr).
$$

To show this is negative, it suffices to show that $e^{-2 (1+\delta) (1-\theta)}-2 \theta ^2+\theta > 0$. Since $\delta \leq 1/2$, it suffices to show that $e^{-3 (1-\theta)}-2 \theta ^2+\theta > 0$ which can be verified by routine calculus for $\theta \in (0,1)$. 
This shows that $\frac{\partial F(\theta, 1)}{\partial \theta} \leq 0$, and so $F(\theta, t) \geq F(1, 1) = 0$.

Thus $F(\theta, t) \geq 0$ for all $\theta  \in [0,1]$ and $t \in [1,2]$ and hence the inequality of Eq.~(\ref{g3}) holds.
\end{proof}

\section{Putting it together: Bounds on the Contraction Algorithm}
\label{sec:final}
We now finish by getting our main bound for the Contraction Algorithm. 
\begin{lemma}
\label{llem1}
Suppose that $J$ is an edge set with $\alpha = |J|/\lambda_k$ and $n \geq i \geq 8 \alpha k^2 + 2 k$. Then $J$ survives the Contraction Algorithm to stage $i$ with probability at least $(n/i)^{-\alpha k} k^{-O(\alpha k^2)}$.
   \end{lemma}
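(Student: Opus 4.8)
The plan is to assemble three ingredients already in hand: the extremal count of \Cref{main-extremal-theorem}, the edge-count-driven bound on $\bE[R_i]$ from \Cref{main-prop1}, and the survival estimate of \Cref{hf1}; the only real choice to make is the value of the free parameter $\eps$ from \Cref{sec:small-cuts}. I would set $\eps := \tfrac{1}{2k}$, which lies in the permitted range $[0,1/k)$ and yields $\delta = \tfrac{1}{2(k-1)} \le \tfrac12$ and $\beta = k + 4\alpha k^2$. The hypothesis $i \ge 8\alpha k^2 + 2k$ is then exactly $i \ge 2\beta$, so $i \ge 2\beta \ge \beta$, $i \ge \max\{4\alpha k, k\}$, and $n \ge i \ge 2\beta$; in particular all hypotheses of \Cref{hf1} and of \Cref{main-prop1} (applied to $G$ itself) hold, and $p := n-\beta \ge n/2 > 0$.

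Next I would bound the number $s$ of good cuts of $G$. A good cut is by definition a medium cut, so \Cref{main-extremal-theorem} gives $s \le k^{O(k)} n$. \Cref{main-prop1} then yields $\bE[R_i] \le f(i-\beta, s, n-\beta)$ with $f$ as in Eq.~(\ref{eq:fun-f}), and it remains to estimate the two summands of $f$. For the first,
\[
\log\tfrac{n-\beta}{i-\beta} = \log\tfrac ni + \log\!\Bigl(1 + \tfrac{\beta(n-i)}{n(i-\beta)}\Bigr) \le \log\tfrac ni + \log\!\Bigl(1 + \tfrac{\beta}{i-\beta}\Bigr) \le \log\tfrac ni + \log 2,
\]
using $i-\beta \ge \beta$. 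For the second summand, $\tfrac{1}{1+\delta} \le 1$, $1 + \tfrac1\delta = 2k-1$, $1 - (j/p)^{\delta} \le 1$, and $\tfrac sp \le \tfrac{k^{O(k)}n}{n/2} = k^{O(k)}$, so the argument of its logarithm is at most $1 + k^{O(k)}(2k-1) = k^{O(k)}$, contributing $O(k\log k)$. Hence $\bE[R_i] \le \log(n/i) + O(k\log k)$.

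Plugging this into \Cref{hf1}, the edge set $J$ survives to stage $i$ with probability at least
\[
e^{-\alpha k\,\bE[R_i] - \alpha k} \ge e^{-\alpha k(\log(n/i) + O(k\log k)) - \alpha k} = (n/i)^{-\alpha k} \cdot e^{-O(\alpha k^2\log k)} = (n/i)^{-\alpha k}\,k^{-O(\alpha k^2)},
\]
which is the stated bound (here the additive $\alpha k$ and the $\log 2$ term are absorbed into $k^{-O(\alpha k^2)}$ since $k \ge 3$).

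I do not anticipate a genuine obstacle here: the hard analysis — the extremal bound on medium (hence good) cuts, and the control of $\bE[R_i]$ through the function $f$ — is already complete, and this lemma is only the assembly step. The one point deserving attention is the choice of $\eps$, which must be $\Theta(1/k)$: making it smaller inflates $\beta = k + 2\alpha k/\eps$ past $i$ and breaks the $i \ge \beta$ requirement of \Cref{main-prop1}, whereas $\eps = \tfrac{1}{2k}$ keeps both $\tfrac{\beta}{i-\beta}$ and $\tfrac np$ bounded by absolute constants, so that the two estimates above close cleanly.
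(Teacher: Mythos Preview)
Your proof is correct and follows essentially the same approach as the paper's: combine \Cref{main-extremal-theorem}, \Cref{main-prop1}, and \Cref{hf1} after fixing $\eps = \Theta(1/k)$. The only difference is the specific choice of $\eps$ --- the paper uses $\eps = \tfrac{k+1}{2k^2}$ (giving $\delta = \tfrac{1}{2k}$) while you use $\eps = \tfrac{1}{2k}$ (giving $\delta = \tfrac{1}{2(k-1)}$); your choice makes $2\beta$ equal the hypothesis bound $8\alpha k^2 + 2k$ exactly, which is slightly cleaner, but the two arguments are otherwise identical.
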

   \begin{proof}
  Let us set $\eps = \frac{k+1}{2 k^2}$, and also define $\beta = k + 2 \alpha k/\eps$, and $j = i - \beta, p = n - \beta$ and $\delta = \frac{1}{2 k}$. By \Cref{main-extremal-theorem}, the number of medium cuts in $G$ is at most $a n$ for $a = k^{O(k)}$, and so \Cref{main-prop1} gives:
  $$
  \bE[R_i] \leq f(j, a n, p) = \log(p/j)  + \frac{\log \bigl( 1 + \frac{a n}{p} (1 + 1/\delta) (1 - (j/p)^{\delta}) \bigr)}{1 + \delta}.
  $$
  
Our condition on $i$ ensures $i \geq 2 \beta$. So $p \geq n/2$ and $j \geq i/2$, and thus $\log (p/j) \leq \log(n/i) + O(1)$ and $ a n / p \leq 2 a$. We therefore have
  $$
  \bE[R_i] \leq \log(n/i) + \frac{\log \bigl( 1 + 2 a (1 + 1/\delta) \bigr)}{1+\delta}   + O(1)  \leq \log (n/i) + \log a + O(1)
  $$
 Note that $i \geq \max\{ 4 \alpha k, k \}$  as  required in \Cref{hf1}. Thus, $J$ survives with probability at least $ (n/i)^{-\alpha k} e^{ -\alpha k -  \alpha k( \log a + \log k +  O(1) ) }$. Since $a = k^{O(k)}$, this is at least  $(n/i)^{-\alpha k} k^{-O(\alpha k^2)}$.
  \end{proof}

\begin{theorem}
\label{main-thmA1}
Running the Contraction Algorithm with parameter $\tau = \lceil 20 \alpha k^2 \rceil$ produces any given $k$-cut $K$ of weight at most $\alpha \lambda_k$ with probability at least $n^{-\alpha k} k^{-O(\alpha k^2)}$.
\end{theorem}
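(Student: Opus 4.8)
The plan is to apply \Cref{llem1} with $J = \partial K$ and $i = \tau$, and then account for the final uniform choice among the $k$-cuts of the contracted graph. Write $\alpha' = |\partial K|/\lambda_k \le \alpha$; since the survival probability only improves as the parameter decreases, it suffices to treat the case $\alpha' = \alpha$, so we may assume $|\partial K| \le \alpha\lambda_k$ and $J = \partial K$ has $|J|/\lambda_k = \alpha$. First I would check that $\tau = \lceil 20\alpha k^2\rceil$ meets the hypothesis of \Cref{llem1}: indeed $\tau \ge 20\alpha k^2 \ge 8\alpha k^2 + 2k$, since $12\alpha k^2 \ge 2k$ for all $k \ge 2$ and $\alpha \ge 1$.

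Next I would split into two cases according to whether $n \le \tau$. If $n \le \tau$, the while-loop of \Cref{euclid} never executes and the algorithm returns a uniformly random $k$-cut of $G$. There are at most $k^n \le k^\tau$ many $k$-cuts (assign each of the $\le \tau$ vertices one of $k$ classes), so $K$ is output with probability at least $k^{-\tau} \ge k^{-O(\alpha k^2)} \ge n^{-\alpha k} k^{-O(\alpha k^2)}$. If instead $n > \tau$, then \Cref{llem1} (with $i = \tau$, whose hypothesis we just checked, and using \Cref{main-extremal-theorem} internally) gives that $J = \partial K$ survives to stage $\tau$ with probability at least $(n/\tau)^{-\alpha k} k^{-O(\alpha k^2)} = n^{-\alpha k}\,\tau^{\alpha k}\, k^{-O(\alpha k^2)} \ge n^{-\alpha k} k^{-O(\alpha k^2)}$, where the last step simply discards the factor $\tau^{\alpha k} \ge 1$.

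Finally I would condition on the event that $\partial K$ survives to stage $\tau$. Then no edge crossing $K$ was ever contracted, so the partition $K$ descends to a partition of $V(G_\tau)$ into $k$ nonempty parts; that is, $K$ is one of the at most $k^\tau$ many $k$-cuts of the $\le\tau$-vertex graph $G_\tau$, from which the algorithm returns one uniformly at random. Hence, conditional on survival, $K$ is output with probability at least $k^{-\tau} = k^{-O(\alpha k^2)}$. Multiplying this with the survival probability from the previous paragraph yields the claimed bound $n^{-\alpha k} k^{-O(\alpha k^2)}$.

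I do not expect a genuine obstacle here: all the real work sits in \Cref{llem1} and \Cref{main-extremal-theorem}. The only points needing care are verifying the numerical hypothesis $i \ge 8\alpha k^2 + 2k$ for the chosen $\tau$, observing that it is harmless to drop the favorable factor $\tau^{\alpha k}$ when converting a "survives to stage $\tau$" bound into an "output" bound, and disposing of the degenerate regime $n \le \tau$ separately.
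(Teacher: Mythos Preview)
Your proposal is correct and follows essentially the same approach as the paper: split into the cases $n\le\tau$ and $n>\tau$, in the latter invoke \Cref{llem1} with $J=\partial K$ and $i=\tau$ to get survival probability $(n/\tau)^{-\alpha k}k^{-O(\alpha k^2)}$, and then multiply by the $k^{-\tau}=k^{-O(\alpha k^2)}$ chance of picking $K$ among the $k$-cuts of $G_\tau$. The paper's proof is slightly terser (it does not spell out the numeric check $\tau\ge 8\alpha k^2+2k$ or the reduction from $\alpha'=|\partial K|/\lambda_k$ to $\alpha$), but the argument is the same.
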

\begin{proof}
If $n \geq \tau$, then \Cref{llem1} applied to $J = \partial K$ (noting that necessarily $\alpha \geq 1$) shows that $K$ survives to $G_{\tau}$ with probability at least $(n/\tau)^{-\alpha k} k^{-O(\alpha k^2)}$. Then $K$ is selected from $G_\tau$ with probability at least $k^{-\tau} \geq k^{-O(\alpha k^2)}$.  Combining these probability bounds gives the stated result. If $n < \tau$, then the Contraction Algorithm simply selects a random $k$-cut, and so $K$ is chosen with probability at least $k^{-n}  \geq k^{-O(\alpha k^2)}$.
\end{proof}

\begin{corollary}
There are at most $n^{\alpha k} k^{O( \alpha k^2)}$ many $k$-cuts in $G$ with weight at most $\alpha \lambda_k$.
\end{corollary}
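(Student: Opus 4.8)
The plan is to combine the lower bound on the probability that a fixed low-weight $k$-cut survives the Contraction Algorithm (\Cref{main-thmA1}) with a straightforward counting argument. Fix the parameter $\alpha \geq 1$ and run the Contraction Algorithm with $\tau = \lceil 20 \alpha k^2 \rceil$. By \Cref{main-thmA1}, every $k$-cut $K$ with $|\partial K| \leq \alpha \lambda_k$ is output by this single execution with probability at least $p := n^{-\alpha k} k^{-O(\alpha k^2)}$.

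The key observation is that the Contraction Algorithm outputs exactly one $k$-cut per run, so the events ``$K$ is the output'' are disjoint across distinct $k$-cuts $K$. Hence, if we let $\mathcal{K}$ denote the collection of all $k$-cuts of weight at most $\alpha \lambda_k$, then summing the output probabilities over $K \in \mathcal{K}$ gives a quantity that is at most $1$ (it is the probability that the output lies in $\mathcal{K}$). Since each term is at least $p$, we get $|\mathcal{K}| \cdot p \leq 1$, i.e.
$$
|\mathcal{K}| \leq 1/p = n^{\alpha k} k^{O(\alpha k^2)},
$$
which is precisely the claimed bound. One minor point to address is the edge case $n < \tau$, but there the argument is even more immediate: the algorithm picks a uniformly random $k$-cut among the (at most $k^n \le k^{\tau}$) $k$-cuts of the $n$-vertex graph, so the total number of $k$-cuts is trivially at most $k^{O(\alpha k^2)}$, which is within the stated bound.

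There is no real obstacle here; the entire content of the corollary is already carried by \Cref{main-thmA1}, and the remaining step is the standard ``probabilistic method in reverse'' trick of turning a per-object lower bound on an output probability into an upper bound on the number of objects. The only thing worth being careful about is that the probability bound in \Cref{main-thmA1} is uniform over all $k$-cuts of weight at most $\alpha \lambda_k$ (with the same choice of $\tau$ depending only on $\alpha$ and $k$, not on $K$), which it is, so a single run of the algorithm simultaneously witnesses the lower bound for every $K \in \mathcal{K}$ and the disjointness argument goes through verbatim.
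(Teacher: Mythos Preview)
Your proposal is correct and matches the paper's approach exactly: the corollary is stated without proof as an immediate consequence of \Cref{main-thmA1}, relying on precisely the disjointness-of-output-events argument you give (cf.\ the one-line justification for \Cref{cor:enum} in the introduction).
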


We could enumerate these $k$-cuts by repeatedly running the Contraction Algorithm, but each iteration would cost $O(n^2)$ time giving an overall runtime of roughly $O(n^{\alpha k + 2})$.  The next result shows how to remove this extraneous $n^2$ factor using a recursive version of the Contraction Algorithm from \cite{KS96}.  Note that directly printing out the $k$-cuts could take $\Omega(n^{\alpha k + 1})$ time, since each $k$-cut defines a partition of $V$. Hence, the algorithm necessarily produces the collection of $k$-cuts in a compressed data structure, which supports basic operations such as counting, sampling, etc. See \cite{KS96} or \cite{HarrisSrinivasan} for a more in-depth discussion.

\begin{theorem}
\label{rca-thm}
For each $k\ge3$, there is an algorithm to enumerate all $k$-cuts of weight at most $\alpha \lambda_k$ in time $n^{\al k}(\log n)^{O(\al k^2)}$ with probability at least $1 - 1/\poly(n)$.
\end{theorem}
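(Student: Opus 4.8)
\textbf{Proof proposal for \Cref{rca-thm}.}
The plan is to follow the Recursive Contraction Algorithm framework of Karger and Stein~\cite{KS96}, with the single-run survival probability bound from \Cref{main-thmA1} (and the cruder bound from \Cref{simpl-cor1} for edge cases) plugged in wherever their original $2$-cut analysis used the factor $n^{-2}$ per halving. Recall that in the Recursive Contraction Algorithm, one contracts the graph down from $n'$ to roughly $n'/2^{1/(\alpha k)}$ vertices (rather than all the way to $\tau$), then recurses twice on the resulting graph; unrolling the recursion, a fixed $k$-cut $K$ of weight at most $\alpha\lambda_k$ survives one root-to-leaf path with probability $\Omega(1/\poly(\text{depth}))$ scaled appropriately, so that boosting over the $\approx n^{\alpha k}$ leaves of the recursion tree makes $K$ survive \emph{some} leaf with constant probability; repeating $O(\log n)$ times drives the failure probability below $1/\poly(n)$. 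The contraction from $n'$ to $n'/2^{1/(\alpha k)}$ vertices can be performed in $O((n')^2)$ time, and the recurrence for the total running time is $T(n') = 2\,T(n'/2^{1/(\alpha k)}) + O((n')^2)$, whose solution is $(n')^{\alpha k}\cdot\polylog$ once $\alpha k \ge 2$ (with the $(\log n)^{O(\alpha k^2)}$ factor absorbing both the recursion overhead and the $k^{-O(\alpha k^2)}$ loss from \Cref{main-thmA1}).

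The key steps, in order, are: (i)~state the recursive procedure, choosing the contraction target per level so that the per-level survival probability of $K$, as given by \Cref{main-thmA1} applied to graphs of the current size, is a constant bounded away from $0$ (this is exactly where we need the \emph{self-reducibility} of the bound: contracting to $n'/2^{1/(\alpha k)}$ vertices preserves $K$ with probability $\approx \tfrac12 k^{-O(\alpha k^2)}$, and $K$ remains a $k$-cut of weight at most $\alpha\lambda_k$ in the contracted graph with the \emph{same} $\lambda_k$ or larger, so $\alpha$ does not grow); (ii)~analyze the success probability of the whole recursion tree by the standard Karger--Stein argument, showing $K$ survives to a leaf with probability $\Omega\bigl(n^{-\alpha k} k^{-O(\alpha k^2)}\bigr)$ and hence, over $O(n^{\alpha k} k^{O(\alpha k^2)}\log n)$ independent repetitions, survives \emph{somewhere} with probability $1 - 1/\poly(n)$; (iii)~solve the running-time recurrence; (iv)~describe the compressed output data structure, exactly as in \cite{KS96} or \cite{HarrisSrinivasan}, so that the collection of $k$-cuts is stored implicitly (supporting counting/sampling) rather than listed explicitly, since explicit output would already cost $\Omega(n^{\alpha k+1})$; and (v)~handle the base case $n' \le \tau = \lceil 20\alpha k^2\rceil$ by brute force over the $k^{-O(\alpha k^2)}$-many $k$-cuts, and handle small-$n$ or degenerate instances via \Cref{simpl-cor1}.

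The main obstacle I expect is bookkeeping the loss factors so that the final exponent of $n$ is \emph{exactly} $\alpha k$ and all the $k$-dependent slack is confined to a $(\log n)^{O(\alpha k^2)}$ (or $k^{O(\alpha k^2)}\log n$) multiplicative term. Concretely, one must verify that: the $k^{-O(\alpha k^2)}$ per-run probability loss from \Cref{main-thmA1}, compounded over the $O(\log n)$ levels of recursion, contributes only a $(\log n)^{O(\alpha k^2)}$ factor (not an $n^{o(1)}$ factor in the exponent of $n$) --- this works because the number of levels is $O(\alpha k\log n)$ and at each level the loss is $k^{-O(\alpha k^2)}$, but the \emph{branching} of the tree is what produces the $n^{\alpha k}$, so the two must be separated carefully; and that the choice of contraction ratio $2^{1/(\alpha k)}$ makes the tree have $\approx n^{\alpha k}$ leaves while keeping $T(n')=(n')^{\alpha k}\polylog$, which requires $\alpha k \ge 2$ --- the case $\alpha k < 2$ is vacuous here since $\alpha\ge 1$ and $k\ge 3$. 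A secondary, more technical point is ensuring that when we recurse, $K$'s weight relative to the \emph{new} graph's $\lambda_k$ does not exceed $\alpha$: contracting edges can only increase (or keep) $\lambda_k$ and can only decrease (or keep) $|\partial K|$, so $\alpha$ is non-increasing along the recursion, which is what lets us reuse \Cref{main-thmA1} with the same $\alpha$ at every level. Once these accounting facts are pinned down, the rest is a direct transcription of the Karger--Stein recursion and its amortized analysis.
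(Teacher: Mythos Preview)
Your proposal has a genuine gap in the accounting, precisely at the point you flag as ``the main obstacle.'' With the standard Karger--Stein schedule (shrink by a constant factor $2^{1/(\alpha k)}$ per level, branch twice), the recursion depth is $D=\Theta(\alpha k\log n)$. At each level the survival bound from \Cref{llem1} is $(n_\ell/n_{\ell+1})^{-\alpha k}\cdot k^{-O(\alpha k^2)}=\tfrac12\cdot k^{-O(\alpha k^2)}$, so with branching factor $2$ the expected number of surviving children is $k^{-O(\alpha k^2)}<1$: the process is \emph{subcritical}, and the probability that $K$ reaches some leaf is not $\Omega(1/\poly(D))$ but rather decays like $\bigl(k^{-O(\alpha k^2)}\bigr)^{D}=n^{-\Omega(\alpha^2 k^3\log k)}$. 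That loss is a genuine polynomial in $n$, not a $(\log n)^{O(\alpha k^2)}$ factor; increasing the branching factor to restore criticality inflates the leaf count by the same polynomial, so the runtime target $n^{\alpha k}(\log n)^{O(\alpha k^2)}$ cannot be met this way.

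The paper's proof avoids this by using a \emph{non-geometric} schedule with only $O(\log\log n)$ levels: it sets $n_i=\lceil\max\{n^{(2/(\alpha k))^i},\,20\alpha k^2\}\rceil$, so that $n_{\ell+1}\approx n_\ell^{2/(\alpha k)}$ and the depth is $T=O\bigl(\tfrac{\log\log n}{\log(\alpha k)}\bigr)$. At level $\ell$ it branches $t_\ell=\lceil(n_\ell/n_{\ell+1})^{\alpha k}\rceil$ times, which exactly cancels the $(n_\ell/n_{\ell+1})^{-\alpha k}$ part of \Cref{llem1}, leaving a per-level success probability of $\Omega(k^{-O(\alpha k^2)})$; compounded over only $T=O(\log\log n)$ levels this is $(\log n)^{-O(\alpha k^2)}$ as desired. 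The choice $n_{\ell+1}\approx n_\ell^{2/(\alpha k)}$ is also what makes the per-level work $\prod_{j<\ell}t_j\cdot t_\ell\cdot O(n_\ell^2)\approx (n/n_{\ell+1})^{\alpha k}\cdot n_\ell^2\approx n^{\alpha k}$ balance across levels. The remaining pieces of your outline (self-reducibility of $\alpha$, the base case via \Cref{simpl-cor1} when $n\le 2^k$, and the compressed output data structure) are correct and match the paper.
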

\begin{proof}
First, if $n \leq 2^k$, then we directly use the Contraction Algorithm to stage $\tau = \lceil 4 \alpha k \rceil$. By \Cref{simpl-cor1}, this enumerates any given $k$-cut with probability at least $n^{-2 \alpha(k-1)} k^{-O(\alpha k)}$, so we must run it for $n^{2 \alpha (k-1)} k^{O(\alpha k)} \cdot \poly(\alpha, k, \log n)$ trials to get them all. This gives overall runtime of  $n^{2 \alpha (k-1)} k^{O(\alpha k)} \cdot \poly(\alpha, k, \log n) \cdot O(n^2)$, which is at most $e^{O( \alpha k^2)}$ by our assumption on $n$.  We thus assume for the remainder of the proof that $n \geq 2^k$.
\iffalse
Likewise, if $n \leq 20 \alpha k^3/(k+1)$, then we can simply select a $k$-cut of $G$, without running the Contraction Algorithm at all. Each application enumerates any given $k$-cut with probability at least $k^{-n} \geq k^{-O(\al k^2)}$, so we need to run it for $k^{O(\alpha k^2)} \cdot O(\alpha k \log n)$ trials to get them all. Each run of the Contraction Algorithm takes $O(n^2)$ time, giving total runtime of  $k^{O(\alpha k^2)}$.  By our assumption that $k \leq \log_2 n$, this is at most $(\log n)^{O(\al k^2)}$ as desired. Thus, for the remainder of the proof, we also assume that $n \geq 20 \alpha k^3/(k+1)$.
\fi

We use a recursive algorithm, whose state is represented as a pair $(H,\ell)$ where $H$ is the current graph and $\ell = 0, \dots, T$ is the current level in the recursion.   The algorithm begins with the input graph $(G, 0)$ at level $\ell = 0$.  Given input $(G_\el,\el)$ at level $\ell$, there are two cases. If $\ell < T$, the algorithm runs $t_{\ell} = \lc (n_\el/n_{\el+1})^{\al k}\rc$ independent trials of the Contraction Algorithm to $n_{\el+1}$ vertices and recursively calls $(H,\el+1)$ for each resulting contracted graph $H$. Otherwise, if $\ell = T$, the algorithm outputs a randomly chosen $k$-cut. Here, the parameters $n_i$ are given by
 $$
n_i= \lc  \max\{  n^{ \bigl( \tfrac{2}{\al k} \bigr)^i } ,   20 \alpha k^2 \} \rc
$$
and the recursion depth $T$ is the first value with $n_T = \lceil 20 \alpha k^2 \rceil$.  Since $\log(20 \alpha k^2) \leq O( \alpha k)$ and $\alpha k / 2 \geq 3 / 2$, we have $T \leq O( \frac{\log \log n}{\log(\alpha k)})$.

To calculate the algorithm's success probability, fix some $k$-cut $K$ of $G$ with $|\partial K| \leq \alpha \lk$, and define a state $(G_\el,\el)$ to be \emph{successful} if no edge in $K$ has been contracted so far from $G_0 = G$ to $G_\el$. Clearly, $(G_0,0)$ is successful. For each successful input $(G_\el,\el)$ with $\el<T$, by \Cref{llem1} with $i=n_{\el+1}$, the probability that $K$ survives on each trial is at least
$ (n_\el/n_{\el+1})^{-\alpha k} \psi$ where $\psi = k^{-O(\alpha k^2)}.$ Over all $t_{\ell}$ trials, $K$ survives at least once with probability
$$
1 - \Bigl( 1 -  (n_\el/n_{\el+1})^{-\alpha k} \psi \Bigr)^{t_{\ell}}   \ge 1-e^{-\psi}  .
$$

Thus, given that some instance $(G_\el,\el)$ in the recursion tree is successful, the probability that at least one instance $(G_{\el+1},\el+1)$ is successful is at least $1 -e^{-\psi} \geq \psi/2$. Over all the $T=O(\frac{\log\log n}{\log (\alpha k)})$ levels of the recursion, the probability that there is some successful instance $(G_T,T)$ is at least $(\psi/2)^{T} \geq (\log n)^{-O(\alpha k^2)}$.  Finally the probability of selecting $K$ from a successful instance $(G_T, T)$ is at least $k^{-n_T} \geq k^{-O(\alpha k^2)}$. So, overall, $K$ is selected with probability at least 
$(\log n)^{-O(\alpha k^2)} \cdot k^{-O(\alpha k^2)}$. Due to our assumption that $n \geq 2^k$, this is at least $(\log n)^{-O(\alpha k^2)}$.

We now bound the runtime. For each level $\el < T$,  there are  $\prod_{j=0}^{\el-1} t_j \le \prod_{j=0}^{\el-1} 2 (n_j/n_{j+1})^{\al k} = 2^{\ell}(n_0/n_\el)^{\al k}$ instances $(G_\el,\el)$. In each such instance, the algorithm runs $t_{\ell}$ trials of the Contraction Algorithm, each taking $O(n_\el^2)$ time. The running time over all instances $(G_\el,\el)$ is therefore at most 
\[ 2^\el(n_0/n_\el)^{\al k}\cdot  t_{\ell}  \cdot O(n_\el^2) \le O \bigl( 2^T \cdot  (n_0/n_{\el+1})^{\al k}  \cdot n_\el^2 \bigr), \]
which is at most $O(2^T (2n_0)^{\al k})$ since $n_{\el+1} \ge n_\el^{2/(\al k)}/2$.  Summed over all $T$ recursion levels, the total runtime is at most $T \cdot O(2^T (2n_0)^{\al k}) \leq (2n)^{\alpha k} \cdot (\log n)^{O(1)}$.

If we repeat the entire recursive algorithm from $(G_0,0)$ a total of $(\log n)^{\Omega(\al k^2)} $ times, then $K$ is selected with probability at least $1/2$. There are $k^{O(\alpha k^2)} n^{\alpha k}$ many such $k$-cuts, so we run a further $O(\alpha k^2 \log k \log n)$ many trials to enumerate them all with probability $1 - 1/\poly(n)$. 
\end{proof}

As one concrete application, we get the main result:
\begin{theorem}
There is an algorithm to compute $\lambda_k$ in time $n^k(\log n)^{O(k^2)}$ for any value $k$.
\end{theorem}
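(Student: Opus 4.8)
The plan is to read this off \Cref{rca-thm} with the choice $\alpha = 1$. First, I would dispose of the boundary case $k = 2$: here $\lambda_2$ is the global minimum cut, computable in $\tO(n^2)$ time (indeed faster) by the Contraction Algorithm of Karger and Stein~\cite{karger1993global,KS96}, which is comfortably within the claimed $n^2(\log n)^{O(1)}$ bound. I would also assume $n \ge k$, since otherwise no $k$-cut exists and $\lambda_k = +\infty$ by convention. So assume $k \ge 3$ and $n \ge k$ from now on.

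Next, I would apply \Cref{rca-thm} with $\alpha = 1$. For this value the $k$-cuts of weight at most $\alpha\lambda_k = \lambda_k$ are exactly the minimum-weight $k$-cuts, and this collection is nonempty because $n \ge k$. \Cref{rca-thm} therefore produces, in time $n^{k}(\log n)^{O(k^2)}$ and with probability at least $1 - 1/\poly(n)$, a (compressed) representation of \emph{all} minimum $k$-cuts of $G$. Note that running this algorithm requires no prior knowledge of $\lambda_k$: the contraction parameter $\tau = \lceil 20 k^2 \rceil$ and the intermediate sizes $n_i$ depend only on $k$ and $n$, not on $\lambda_k$. Given this structure, $\lambda_k$ is extracted in $\poly(n)$ additional time -- e.g. by sampling one of the stored $k$-cuts and summing the weights of its crossing edges, or by reading off the minimum weight over the stored cuts (all of which equal $\lambda_k$) -- and this extra cost is absorbed into the runtime bound.

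There is essentially no new mathematical content here; the one point I would double-check is that the runtime and success-probability claims of \Cref{rca-thm} already cover every regime of $n$ versus $k$ at $\alpha = 1$ -- in particular the small-$n$, large-$k$ branch where \Cref{rca-thm} falls back on brute force at cost $e^{O(k^2)}$, and the use of the recursive (rather than plain) Contraction Algorithm to shave the spurious $O(n^2)$ factor down to $n^k$. Both are handled inside the proof of \Cref{rca-thm}, so nothing further is needed. As a minor alternative that sidesteps the compressed data structure, one could instead run the recursive Contraction Algorithm of \Cref{rca-thm} directly, track the minimum crossing-weight among all $k$-cuts output at the leaves, and repeat $(\log n)^{O(k^2)}$ times; since every output $k$-cut has weight at least $\lambda_k$ while some fixed minimum $k$-cut is output with probability at least $1/2$ after this many repetitions, the tracked minimum equals $\lambda_k$ with high probability, again in time $n^{k}(\log n)^{O(k^2)}$.
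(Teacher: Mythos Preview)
Your proposal is correct and follows essentially the same approach as the paper: dispose of $k=2$ via Karger--Stein, then invoke \Cref{rca-thm} with $\alpha=1$ and return the minimum weight among the enumerated $k$-cuts. Your additional remarks (that the algorithm needs no prior knowledge of $\lambda_k$, and the alternative of tracking the minimum over leaf outputs) are fine elaborations but not required.
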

\begin{proof}
For $k = 2$, this is the standard Recursive Contraction Algorithm of \cite{KS96}. Otherwise, apply \Cref{rca-thm} with $\alpha = 1$. This gives a large collection of $k$-cuts, which includes all the minimum $k$-cuts with high probability. We output the minimum weight of all $k$-cuts found. (The operation of taking minimum weight can be performed on the corresponding data structure.)
\end{proof}

\appendix

\section{Heuristic Bound on $R_i$}
\label{sec:heuristic-bound}
Given a graph $G$ with $s$ medium cuts, consider running
the Contraction Process for some edge set $J$ up to stage $i$. We will focus on the case where $s \ll n$; as it will later turn out, the resulting formulas are also correct (although not optimized) when $s$ is larger than $n$.

In each stage $j \geq i$, where the intermediate graph $G_j$ has $j$ vertices and $m_j$ edges, each good cut $C$ gets selected with probability $\frac{ |\partial C \setminus J| }{ m_j - |J|} \geq (1-\eps) \frac{k}{k-1} \lk / m_j = (1+\delta) \lk/m_j$. Letting $S_i$ denote the number of surviving good cuts at stage $i$, we thus have:
$$
\bE[S_i] \leq s \prod_{j=i+1}^n \Bigl( 1 - \frac{ (1+\delta) \lk}{ m_j} \Bigr) \leq s e^{-\sum_{j=i+1}^n (1+\delta) \lk / m_i} = s e^{-(1+\delta) R_i}.
$$

Since this is just a heuristic derivation, we blur the distinction between $\bE[S_i]$ and $S_i$, and we suppose that $S_i$ itself also satisfies this bound, i.e. $S_i \leq s e^{-(1+\delta) R_i}$.

We have $R_{i-1} = \frac{\lk}{m_i} + R_{i}$. By \Cref{m-bound}, we have $m_i   \geq  S_i \lk/2 + (i - S_i - \beta) \lk$, so
\begin{equation}
\label{barmeqn}
\frac{\bar\lambda_k}{m_i} \leq \frac{1}{i - \beta - S_i/2}.
\end{equation}

In order to carry out the induction proof later, we will need our bound on $R_i$ to have a simple closed form with nice concavity properties. In order to achieve this, we will need to use an upper bound on the quantity $\frac{\bar\lambda_k}{m_i}$ which is a \emph{linear} function of $S_i$. As we have mentioned, in the relevant case, we have $s \leq n$, and in this case we will also have $S_i \leq i$. We can then upper-bound the RHS of Eq.~(\ref{barmeqn}) by its secant line from $S_i = 0$ to $S _i= i - \beta$, yielding
  $$
  \frac{\lk}{m_i} \leq \frac{1}{i - \beta} \Bigl (1 + \frac{S_i}{i-\beta} \Bigr);
  $$
 note that by \Cref{m-bound0}, this upper bound will also be valid in the case where $S_i \geq i - \beta$.
  
Again ignoring any distinctions between random variables and their expectations,  this implies
  $$
  R_{i-1} \leq R_i +  \frac{1}{i - \beta} \Bigl (1 + \frac{ s e^{-(1+\delta) R_i}}{i-\beta} \Bigr).
  $$
  If we define $g(x) = R_{x+\beta}$ and $p = n - \beta$, then this can be relaxed to a differential equation with $g'(i) \approx R_{i+\beta} - R_{i-1 + \beta}$ defined as follows:
  $$
  g'(x) = \frac{-1}{x} \Bigl (1 + \frac{ s e^{-(1+\delta) g(x)}}{x} \Bigr), \qquad g(p) = 0.
  $$
  The differential equation has a closed-form solution:
  $$
  g(x) = \log(p/x) + \frac{\log \bigl(  1 + (s/p) (1 + 1/\delta) (1 - (x/p)^{\delta})  \bigr)}{1 + \delta}.
  $$
 Note the similarity of function $g$ to the function $f$ from
Eq.~\eqref{eq:fun-f} defined in \Cref{sec:small-cuts}.
 
 \iffalse
 In particular,   if we start with $s = O(n)$ and $n \approx p$,  then this implies that
  $$
  R_{n} \approx \log(n/x) + \frac{\log \Bigl( 1 + (s/n) (1 + 1/\delta) (1 - (x/n)^{\delta})\Bigr)}{1+\delta} \leq \log(n/i) + O(1).
  $$

In view of \Cref{hf1}, this shows that $K$ survives to stage $i$ with probability roughly $e^{-\alpha k \bE[R_i]} \approx (n/i)^{-\alpha k}$. 
    \fi

{\small
  \bibliographystyle{alpha}
\bibliography{refs}
}

\end{document}

%%% Local Variables:
%%% mode: latex
%%% TeX-master: t
%%% End: